\newcommand{\problem}[6][ ]{
\begin{prob}
Compute a function $#2$ with the following properties:\\
\begin{tabular}{@{}p{0.125\linewidth}@{}p{0.875\linewidth}@{}}
Input & #3.\\
Output & A labeling coset $#2#4=\Lambda\leq\lab(V)$ such that:\\
(CL1) & $#2#4=\phi#2#5$ for all $\phi\in\iso(V;V')$.\\
(CL2) & $#2#4=#6\pi$#1for some (and thus for all) $\pi\in\Lambda$.
\end{tabular}
\end{prob}
}
\newcommand{\transitive}[2]{
Let $A^\ord:=A^\rho$ and $\Delta^\ord:=(\Delta\rho)^\rho$.\\
Partition $A^\ord=A^\ord_1\cupdot A^\ord_2$ such that
$|A^\ord_1|=\lfloor|A^\ord|/2\rfloor$ and
$A^\ord_1$ is minimal.\\
\minimal\\
Compute $\Psi^\ord:=\stab_{\Delta^\ord}(A^\ord_1,A^\ord_2)$.\\
\emph{(The group can be computed using a membership test as
stated in the preliminaries.)}\\
Decompose $\Delta^\ord$ into left cosets of $\Psi^\ord$ and write
$\Delta^\ord=\bigcup_{i\in[s]}\delta^\ord_i\Psi^\ord$.\\
Compute $\Delta_i\rho_i:=#1(#2,A,\rho\delta^\ord_i\Psi^\ord)$ for each
$i\in[s]$ recursively.\\
Rename the indices in $[s]$ such that:\\
$(#2,\Delta\rho)^{\rho_1}=\ldots=(#2,\Delta\rho)^{\rho_r} \prec(#2,\Delta\rho)^{\rho_{r+1}}\preceq\ldots\preceq
(#2,\Delta\rho)^{\rho_s}$.\\
Return $\Lambda:=
\langle\Delta_1\rho_1,\ldots,\Delta_r\rho_r\rangle$.\\
\emph{(This is the smallest coset containing these cosets as defined
in the preliminaries.)}
}
\newcommand{\minimal}{
\emph{(Minimal w.r.t. the order ``$\prec$'' defined in
\cref{sec:comb:objs:and:lab:cos}.)}}
\newcommand{\iterated}{
\emph{(This is an iterated instance as defined in Section~\ref{sec:atoms}.)}}
\theoremstyle{plain}
\newtheorem{theo}{Theorem}
\crefname{theo}{Theorem}{Theorems}
\newtheorem{lem}[theo]{Lemma}
\crefname{lem}{Lemma}{Lemmata}
\newtheorem{cor}[theo]{Corollary}
\crefname{cor}{Corollary}{Corollarys}
\theoremstyle{definition}
\newtheorem{exa}[theo]{Example}
\crefname{exa}{Example}{Examples}
\newtheorem{prob}[theo]{Problem}
\crefname{prob}{Problem}{Problems}
\theoremstyle{remark}
\crefname{section}{Section}{Sections}
\crefname{appendix}{Appendix}{Appendices}
\crefname{figure}{Figure}{Figures}
\newenvironment{correctness}{\vspace{0.1cm}\noindent(\textit{Correctness.})}{}
\newenvironment{cl1}{\vspace{0.1cm}\noindent(\textit{CL1.})}{}
\newenvironment{cl2}{\vspace{0.1cm}\noindent(\textit{CL2.})}{}
\newenvironment{A}{\vspace{0.1cm}\noindent(\textit{A.})}{}
\newenvironment{AC}{\vspace{0.1cm}\noindent(\textit{C and AC.})}{}
\newenvironment{runtime}{\vspace{0.1cm}\noindent(\textit{Running time.})}{}
\newcommand{\case}[1]{\item[\itshape\mdseries If {#1}:]}
\newenvironment{cs}{\begin{description}}{\end{description}}
\newcommand{\algorithmDAN}[1]{\vspace{0.1cm}\noindent
\underline{An algorithm for $#1$:}}
\definecolor{myBlue}{rgb}{0.0, 0.5, 1.0}
\definecolor{myRed}{rgb}{1.0, 0.5, 0}
\definecolor{myGreen}{rgb}{0.0, 0.5, 0.0}
\definecolor{myYellow}{rgb}{1.0, 1.0, 0.0}
\newcommand{\myRed}{myRed}
\newcommand{\myGreen}{myGreen}
\newcommand{\myBlue}{myBlue}
\renewcommand{\phi}{\varphi}
\renewcommand{\epsilon}{\varepsilon}
\newcommand{\NN}{{\mathbb N}}
\newcommand{\LA}{\textsf{\upshape A}}
\newcommand{\LB}{\textsf{\upshape B}}
\newcommand{\CA}{{\mathcal A}}
\newcommand{\CB}{{\mathcal B}}
\newcommand{\CJ}{{\mathcal J}}
\newcommand{\CK}{{\mathcal K}}
\newcommand{\CL}{{\mathcal L}}
\newcommand{\CO}{{\mathcal O}}
\newcommand{\CR}{{\mathcal R}}
\newcommand{\CS}{{\mathcal S}}
\newcommand{\CX}{{\mathcal X}}
\newcommand{\CY}{{\mathcal Y}}
\let\downarrowOld\downarrow
\renewcommand{\downarrow}{\mathord{\downarrowOld}}
\let\uparrowOld\uparrow
\renewcommand{\uparrow}{\mathord{\uparrowOld}}
\renewcommand{\hat}{\widehat}
\renewcommand{\tilde}{\widetilde}
\renewcommand{\hat}{\widehat}
\newcommand{\sym}{\operatorname{Sym}}
\newcommand{\aut}{\operatorname{Aut}}
\newcommand{\iso}{\operatorname{Iso}}
\newcommand{\stab}{\operatorname{Stab}}
\newcommand{\lab}{{\operatorname{Label}}}
\newcommand{\tranCl}{\operatorname{TClosure}}
\newcommand{\polylog}{\operatorname{polylog}}
\newcommand{\cw}{\operatorname{cw}}
\newcommand{\set}{{\operatorname{Set}}}
\newcommand{\intt}{{\operatorname{Int}}}
\newcommand{\setHyper}{{\operatorname{Hyper}}}
\newcommand{\obj}{\operatorname{Objects}}
\newcommand{\ord}{{\operatorname{Can}}}
\newcommand{\can}{\operatorname{CL}}
\newcommand{\canInt}{\operatorname{CL}_{\operatorname{Int}}}
\newcommand{\canSet}{\operatorname{CL}_{\operatorname{Set}}}
\newcommand{\canSetHyper}{\operatorname{CL}_{\operatorname{Hyper}}}
\newcommand{\canPoint}{\operatorname{CL}_{\operatorname{Point}}}
\newcommand{\canObj}{\operatorname{CL}_{\operatorname{Object}}}
\newcommand{\canMatch}{\operatorname{CL}_{\operatorname{Match}}}
\newcommand{\canGraph}{\operatorname{CL}_{\operatorname{Graph}}}
\newcommand{\cf}{\operatorname{CF}}
\newcommand{\enc}{\texttt{enc}}
\begin{document}

\title{A unifying method for the design of algorithms canonizing combinatorial objects}

\author{Pascal Schweitzer\footnote{The research leading to these results has received funding from the European Research Council (ERC) under the European Union’s Horizon 2020 research and innovation programme (grant agreement No. 820148).}\\
TU Kaiserslautern\\
\texttt{schweitzer@cs.uni-kl.de}
\and
Daniel Wiebking\\
RWTH Aachen University\\
\texttt{wiebking@informatik.rwth-aachen.de}}

\date{}

\maketitle

\begin{abstract}
We devise a unified framework for the design of canonization algorithms.
Using hereditarily
finite sets, we define a general notion of combinatorial objects
that includes
graphs, hypergraphs, relational structures, codes, permutation
groups, tree decompositions, and so on.

Our approach allows for a systematic transfer of the techniques that have been
developed for isomorphism testing to canonization. We use it to design a
canonization algorithm for combinatorial objects in general. This result gives
new fastest canonization algorithms with an asymptotic running time matching the best known isomorphism algorithm for the following types of objects: hypergraphs, hypergraphs of bounded color class size, permutation groups (up to permutational isomorphism) and codes that are explicitly given (up to code equivalence).
\end{abstract}

\section{Introduction}
The problem of computing a canonical form of
a graph can be seen as the task to compute a standard representative of the graph up to isomorphism. Specifically, given an input graph $G$, a graph $G'$ isomorphic to $G$ is to be computed such that the output graph $G'$ depends only on the isomorphism class of $G$ and not on the graph $G$ itself. The problem is closely related to the graph
isomorphism problem, which reduces to the task of computing a
canonical form:
for two given input graphs, we compute their
canonical forms and check whether the canonical forms are equal (rather than isomorphic).

In practice, a canonization algorithm is often preferable to an isomorphism
test, as it allows each graph to be treated separately, rather than having to compare
graphs pairwise. For example, when looking up a molecule in a chemical database, we do not wish to compare the molecule individually to every graph that has been stored in the system.

While various computational problems related to symmetries of graphs are polynomial-time equivalent
to the isomorphism problem, for example the
computation of the automorphism group of a graph or the computation of the orbit partition
\cite{DBLP:journals/ipl/Mathon79}, it is unknown whether or not
canonization reduces to isomorphism in polynomial time.

It is usually not very difficult to turn combinatorial isomorphism tests into canonization algorithms, sometimes the algorithms are canonization algorithms in the first place.
However, there are several isomorphism testing algorithms for which to date no canonization algorithms with the same asymptotic running are known.

This seems to be in particular the case for isomorphism algorithms based on group theoretic techniques pioneered by Luks \cite{DBLP:journals/jcss/Luks82}, who designed a polynomial time isomorphism algorithm for graphs of bounded degree.

Luks's framework and the result for bounded degree graphs were subsequently extended to
canonization~\cite{babai1983canonical}. In that paper, Babai and Luks lay the
foundation for canonization techniques using the string canonization problem and
algorithmically exploiting cosets. The original technique of Luks also sparked a
series of isomorphism algorithms without respective canonization algorithms. For
example, Luks presented a~$2^{O(|V|)}$-time isomorphism algorithm for
hypergraphs~\cite{DBLP:conf/stoc/Luks99}, but the best known canonization
algorithm has the brute force running time of~$|V|^{O(|V|)}$.
A similar situation occurs for hypergraphs of bounded rank \cite{DBLP:conf/focs/BabaiC08}.
As pointed out by Codenotti~\cite{codenotti2011testing} the reason is that these result are based on coset intersection, for which no adequate canonization version is known.
Finally, for Babai's quasipolynomial time isomorphism
algorithm~\cite{DBLP:conf/stoc/Babai16} there was initially no
canonization version. However, concurrently to our work, he extended his result
to canonization \cite{BabaiCanon}. For results building on his
algorithm~\cite{bounded-degree}, there are still no known canonization versions.
Also for the isomorphism problem for groups,
nearly all of the most recent
results seem not to provide canonical forms
\cite{DBLP:conf/icalp/BabaiCQ12,DBLP:journals/gcc/LewisW12,
DBLP:journals/tcs/RosenbaumW15,
DBLP:conf/isaac/GrochowQ15,
DBLP:journals/siamcomp/GrochowQ17,
Brooksbank17}.

\paragraph*{Our Contribution} In this paper, we devise a unified framework
for the design of canonization algorithms. To do so, we define, using hereditarily
finite sets, a general notion of combinatorial objects that includes graphs,
hypergraphs, relational structures, codes, permutation groups, tree
decompositions, and more.
To this end, we devise for various problems canonization algorithms with an asymptotic running time matching the best isomorphism algorithm. For each of them, no canonization algorithm with such a running time has been known before. Specifically, we obtain canonization algorithms for

\begin{itemize}
\item hypergraphs on a vertex set $V$ in time $2^{\CO(|V|)}$ matching the
hypergraph isomorphism algorithm of Luks~\cite{DBLP:conf/stoc/Luks99},
\item hypergraphs $X=(V,H)$ of color class size at most~$k$ in FPT
time~$2^{\CO(k)}(|V|+|H|)^{\CO(1)}$ matching the best isomorphism algorithm of
Arvind, Das, K{\"{o}}bler and Toda \cite{DBLP:journals/algorithmica/ArvindDKT15},

\item explicitly given permutation groups~$\Delta$ (up to permutational
isomorphism) on permutation domain $V$ and of order~$|\Delta|$ in
time~$2^{\CO(|V|)}|\Delta|^{\CO(1)}$ matching the best algorithm for
permutational isomorphism of Babai, Codenotti, and Qiao~\cite{DBLP:conf/icalp/BabaiCQ12},

\item explicitly given codes~$\CA$ (up to code equivalence) of code word
length~$|V|$ and code size~$|\CA|$ in time~$2^{\CO(|V|)}|\CA|^{\CO(1)}$ matching
the best isomorphism algorithm from
\cite{DBLP:conf/soda/BabaiCGQ11,DBLP:conf/icalp/BabaiCQ12}, and

\item combinatorial objects
over a ground set $V$ in general in time 
$2^{\CO(|V|)}n^{\CO(1)}$ where $n$ is the size of the object (formally defined
in Section~\ref{sec:comb:objs:and:lab:cos}).
\end{itemize}

The new canonization algorithm for hypergraphs above solves an open problem that has been repeatedly stated (\cite{DBLP:conf/stoc/Luks99},\cite{DBLP:conf/focs/BabaiC08}
and \cite{DBLP:journals/algorithmica/ArvindDKT15}).
As the input size of a
hypergraph can be as big as~$2^{|V|}$, when measured only in the size of the
underlying vertex set, the simply exponential
running time is optimal. Of course, when measured also in the number of edges,
better algorithms with a more refined running time bound might be possible.

For combinatorial objects in general, neither canonization nor isomorphism has
been considered before in this generality.

\paragraph*{Our Technique}
We advocate a clear separation
between unordered objects and ordered objects. In our framework, an unordered object is an object over a ground set of a priori indistinguishable vertices.
In contrast to this,
an ordered object is an object whose underlying vertex set consists of integers
which thus carries a natural linear order.
It is easily argued, that this induces a
polynomial-time computable linear ordering on the class of all ordered objects
(see \cref{lem:prec}). 

For canonization purposes, every unordered object is associated with an isomorphic ordered one, its canonical form, and the ordering of the ordered objects thereby extends to unordered objects. To compute the canonical form, we use the concept of a labeling coset, which is a coset of maps from unordered
to ordered objects. A labeling coset can be seen as a form of partial
canonization of an unordered object, through which various possible canonical forms have already been ruled out. By
representing labeling cosets compactly via generating sets, we can employ the
existing extensive library of efficient algorithms dealing with permutation
groups.

The fact that the ordered objects are totally ordered
allows us to use deterministic subroutines, used in isomorphism tests,
on ordered objects in an isomorphism-invariant way.
This framework then allows for a systematic transfer of the techniques that have
been developed for isomorphism testing to the realm of canonization. 

The most important
feature of our framework is that it is possible to view these labeling
cosets as combinatorial objects themselves.
The main technique (see \cref{lem:rep}) shows that under mild assumptions
it is possible to replace subobjects (such as subgraphs) by their labeling
cosets without losing or introducing global symmetries.
This paradigm allows for a recursive approach
by computing canonical labelings of substructures first,
then replacing the substructures by their labeling cosets,
and then computing a canonical labeling of a new global object that only consists of the labeling cosets of the substructures rather than the substructures themselves.

For hypergraphs, with a relatively direct application of our framework, we
obtain the canonization algorithm mentioned above. However, for the canonization
of objects in general the key algorithm handles sets of
labeling cosets.
As the labeling cosets can describe global interdependencies,
our technique of bundling and partitioning for the cosets is
significantly more involved and complicated (Section~\ref{sec:sets}).

We also
describe a technique of computing a canonical representation of ordered
groups which in turn allows us to associate a canonical string (encoding)
to every ordered object (see \cref{lem:enc}).
This complements our framework and allows us to use
arbitrary deterministic algorithms for ordered
objects as a black box in an isomorphism-invariant way.

Some of the techniques we describe here were used in a weaker, non-generic sense
in \cite{tree-width} to obtain the fastest known canonization algorithm
for graphs of bounded tree width.
In fact, tree
decompositions, nested tree decompositions, and treelike decompositions can also be viewed as combinatorial
objects in our framework.

\paragraph*{Organization of the Paper}
The goal of the paper is to compute
canonical labelings for arbitrary objects $\CX\in\obj(V)$ formally defined in
Section~\ref{sec:comb:objs:and:lab:cos}.
In a bootstrapping manner, we develop canonization algorithms for
objects that are more and more complex. Each algorithm uses the previous ones as a
black box.
In \Cref{sec:atoms}, we consider two easy
cases of canonization in which the object is a pair of atoms.
First, we show how to compute a canonical labeling
for $\CX$ in the case that
$\CX=(v,\Delta\rho)$ is a pair 
consisting of a vertex $v\in V$
and a labeling coset $\Delta\rho$.
Second, we show how to compute a canonical labeling for
$\CX$ in the case that
$\CX=(\Theta\tau,\Delta\rho)$ is a pair of two labeling cosets. 
In Section~\ref{sec:hyper}, we give the canonization algorithm for
hypergraphs, which uses the object replacement paradigm
explained in Section~\ref{sec:obj:replacement}.
In \Cref{sec:sets}, building on that we show how to compute canonical labelings
for $\CX$ in the case that $\CX=\{\Delta_1\rho_1,\ldots,\Delta_t\rho_t\}$
is a set of labeling cosets.
In \Cref{sec:obj}, we bring all our tools together to compute
canonical labelings for arbitrary objects $\CX\in\obj(V)$.
Finally, in Section~\ref{sec:Group}, we describe the technique of canonical
generating sets
which allows us to associate a canonical string (encoding)
to every ordered object
and which (in a certain sense) allows arbitrary deterministic group
theoretic algorithms to be used within our framework in an
isomorphism-invariant way.

\section{Preliminaries}

\paragraph{Set Theory}
For an integer $t$, we write $[t]$ for $\{1,\ldots,t\}$.
For a set $S$ and an integer $k$, we write $\binom{S}{k}$ for the $k$-element
subsets of $S$ and $2^S$ for the power set of $S$.

\paragraph{Group Theory}
We write composition of functions from left to right, e.g.,
for two functions $\phi:V\to V'$ and $\rho:V'\to V''$, we write
$\phi\rho$ for the function that first applies $\phi$ and then~$\rho$.
For a set $V$, we write $\sym(V)$ for the symmetric group on $V$
and for an integer $t$, we write $\sym(t)$ for $\sym([t])$.
By $\stab_\Psi(A):=\{\psi\in\Psi\mid \psi(a)\in A
\text{ for all }a\in A\}$, we denote the setwise
stabilizer of $A\subseteq V$ in $\Psi\leq\sym(V)$.
We sometimes also drop the index and write $\stab(A)$
for $\stab_{\sym(V)}(A)$.
For a vertex $v\in V$, we write
$\stab_\Psi(v)$ for $\stab_\Psi(\{v\})$.
We want to extend the definition to tuples $(A_1,\ldots,A_t)$
where $A_i\subseteq V$ and $t\geq 2$.
Inductively, we define $\stab_\Psi(A_1,\ldots,A_t)$ as
$\stab_{\stab_\Psi(A_1)}(A_2,\ldots,A_t)$.
Note that this way the
stabilizer
$\stab_\Psi(A_1,\ldots,A_t)=\{\psi\in\Psi\mid \psi(a)\in A_i
\text{ for all }i\in[t] \text{ and } a\in A_i\}$ can be computed
using an algorithm for the binary
stabilizer function $\stab_{-}(-)$
that gets an arbitrary permutation group $\Psi$
and only one set $A\subseteq V$ as an input.
A set $A\subseteq V$ is called $\Psi$-invariant
if $\stab_\Psi(A)=\Psi$.
For a permutation group $\Psi\leq\sym(V)$
and a vertex $v\in V$, we write $v^\Psi=\{\psi(v)\mid \psi\in\Psi\}$
for the $\Psi$-orbit of $v$.
The $\Psi$-orbit partition of $V$ is
a partition $V=V_1\cupdot\ldots\cupdot V_t$
such that $v,u\in V_i$ for some $i\in[t]$, if and only if
$v^\Psi=u^\Psi$.
A group $\Psi\leq\sym(V)$ is transitive on a~$\Psi$-invariant set $A\subseteq V$,
if $A$ consists of only one $\Psi$-orbit, i.e., $A=v^\Psi$
for some $v\in V$.
Slightly abusing terminology, a \emph{coset} of a set $V$ is a set~$\Lambda$ of
bijections from $V$ to a set~$V'$ such that~$\Lambda = \Delta
\rho=\{\delta\rho\mid\delta\in\Delta\}$ for some subgroup~$\Delta\leq\sym{(V)}$
and a bijection~$\rho:V\to V'$.

\paragraph{Generating Sets and Polynomial-Time Library}

For a set $S\subseteq\sym(V)$,
we write $\langle S\rangle$
for the smallest group $\Psi\leq\sym(V)$
for which $S\subseteq\Psi$.
In this case, $S$ is called a \emph{generating set}
for~$\Psi$. We refer to~\cite{seress} for the basic theory of handling
permutation groups algorithmically.
Many tasks can be performed efficiently
when a group is given implicitly via a generating set.
We list the results we use.

\begin{enumerate}
  \item Permutation groups and cosets
  can be
  represented implicitly via generating sets
  that can be chosen of size quadratic in $|V|$.
  \item The pointwise stabilizer $\stab_\Psi(v)$
  of a vertex $v\in V$ in a
  group $\Psi\leq\sym(V)$ can be computed
  with the Schreier-Sims algorithm in time
  polynomial in $|V|$.
  \item A subgroup of a permutation group with polynomial time membership problem
  can be computed in time polynomial
  in the index of the subgroup.
  \item Let $\CS=\Delta_1\rho_1,\ldots,\Delta_t\rho_t$ be
  a sequence of cosets of $V$. We write
  $\langle \CS\rangle$ for the smallest coset $\Lambda$
  such that $\Delta_i\rho_i\subseteq \Lambda$ for all $i\in[t]$.
  Given a representation for
  $\CS$, the coset $\langle \CS\rangle$ can
  be computed in polynomial time.
  Furthermore, the computation of $\langle \CS\rangle$ is
  isomorphism invariant w.r.t. $\CS$, i.e.,
  $\phi^{-1}\langle \CS\rangle=
\langle \phi^{-1}\CS \rangle$ for
all bijections $\phi:V\to V'$ (see~\cite[Lemma
9.1]{DBLP:journals/corr/GroheS15a}).
\end{enumerate}

\section{Combinatorial Objects and Labeling Cosets}\label{sec:comb:objs:and:lab:cos}

\paragraph{Labeling Cosets}
A \emph{labeling coset} of $V$ 
is set of bijective mappings
$\Delta\rho=\{\delta\rho\mid \delta\in\Delta\}$,
where $\rho$ is a
bijection from $V$ to $\{1,\ldots,|V|\}$
and $\Delta$ is a subgroup of $\sym(V)$.
Let $\rho:V\to\{1,\ldots,|V|\}$ be a bijection.
We write $\lab(V)$ for
the labeling coset $\sym(V)\rho=
\{\sigma\rho\mid \sigma\in\sym(V)\}$.
We say that $\Theta\tau$ is a \emph{labeling subcoset}
of a labeling coset $\Delta\rho$, written $\Theta\tau\leq\Delta\rho$,
if $\Theta\tau$ is a subset of $\Delta\rho$
and $\Theta\tau$ again forms a labeling coset.
For a labeling coset $\Delta\rho\leq\lab(V)$
and a $\Delta$-invariant set $A\subseteq V$, we define
the restriction of $\Delta\rho$ to $A$ as $(\Delta\rho)|_A
:=\{\lambda|_A\mid \lambda\in\Delta\rho\}$.
Observe that $(\Delta\rho)|_A$ is not necessarily a
labeling coset again since
the image of $\tilde\lambda\in(\Delta\rho)|_A$ might be a set
of natural numbers different from $\{1,\ldots,|A|\}$.
Let $\kappa$ be the unique bijection from
$\rho(A)$ to $\{1,\ldots,|A|\}$ that preserves the standard
ordering ``$<$'' of natural numbers.
We define the \emph{induced
labeling coset} of $\Delta\rho$ on $A\subseteq V$ as
$(\Delta\rho) \downarrow_A:=(\Delta\rho)|_A\kappa$.
Similarly, for a labeling coset $\Theta\tau\leq\lab(A)$,
we define the \emph{lifted
labeling coset} of $\Theta\tau$ to $V\supseteq A$ as
$(\Theta\tau)\uparrow^V:=\{\gamma\in\lab(V)\mid \gamma|_A\in\Theta\tau\}$.

\paragraph{Hereditarily Finite Sets and Combinatorial Objects}

We inductively define \emph{hereditarily finite sets}
over a ground set $V$.
Each vertex $X\in V$ and each
labeling coset $Y=\Delta\rho\leq\lab(V)$ is called an \emph{atom} and is in particular a hereditarily finite set.
Inductively, if $X_1,\ldots,X_t$ are hereditarily finite sets,
then $\CX=\{X_1,\ldots,X_t\}$ and also
$\CX=(X_1,\ldots,X_t)$ are hereditarily finite sets where $t\in\NN\cup\{0\}.$
A \emph{(combinatorial) object} is a pair~$(V,\CX)$ where~$\CX$ is a
hereditarily finite set over~$V$. The set of all (combinatorial) objects over~$V$ is denoted by $\obj(V)$. In the following, we will usually assume that~$V$ is apparent from context and identify~$\CX$ with the object~$(V,\CX)$, not distinguishing between hereditarily finite sets and combinatorial objects.

We say that an object is \emph{ordered} if the ground set $V$ is a linearly
ordered set.
An object is \emph{unordered}
if $V$ is an (unordered) set.
The linearly ordered ground sets considered in this paper are always
subsets of $\NN$ with their standard ordering ``$<$''. Additionally, we will never
consider partially ordered sets in which some but not all elements of $V$ are in
$\NN$.

The expressiveness of the object formalism is quite extensive.
In particular, we can view graphs $G=(V,E)$ with~$E\subseteq\binom{V}{2}$, hypergraphs
$X=(V,H)$ with~$H\subseteq 2^V$, and relational structures $Y=(V,R_1,\ldots,R_t)$ with~$R_i\subseteq V^{k_i}$
as unordered objects (over $V$).
A function $f:V\to V'$ can be encoded as a set
of pairs $\{(v,f(v))\mid v\in V\}$ and is thus an object.
Note that a labeling coset could in principle also be represented as a set of maps, and thus
as an object in which all atoms are of the type $X\in V$.
However, we want to succinctly represent labeling cosets via generating sets rather than as a set of labelings.
This is precisely the reason why we model them as a second kind of atom.

\paragraph{Applying Functions to Unordered Objects}
Let $V$ and $V'$ be ground sets where $V$ is unordered
and $V'$ is either unordered or ordered. 
Let $\mu:V\to V'$ be a bijection.
For an object $\CX\in\obj(V)$,
the image of $\CX$ under $\mu$,
written $\CX^\mu$, is an object over $V'$ defined as follows.
For a vertex $X=v\in V$, we define $X^\mu:=\mu(v)$.
For a labeling coset $Y=\Delta\rho\leq\lab(V)$, we
define $Y^\mu=(\Delta\rho)^\mu:=\mu^{-1}\Delta\rho$.
Inductively, 
for an object $\CX=\{X_1,\ldots,X_t\}$,
we define
$\CX^\mu:=\{X_1^\mu,\ldots,X_t^\mu\}$.
Analogously, for an object $\CX=
(X_1,\ldots,X_t)$, we define
$\CX^\mu:=(X_1^\mu,\ldots,X_t^\mu)$.
For example, the image of a graph $G=(V,E)$
under $\mu$
is a graph
$G^\mu=(V',E^\mu)$ that has
an edge $\{\mu(v),\mu(u)\}$,
if and only if $G$ has the edge $\{v,u\}$.

\paragraph{Automorphisms of Unordered Objects}
Two unordered objects $\CX\in\obj(V)$ and $\CX'\in\obj(V')$
are \emph{isomorphic}, written $\CX\cong\CX'$, if
there is a bijection $\phi:V\to V'$ such that
$\CX^\phi=\CX'$.
In this case, $\phi$ is called an \emph{isomorphism}.
The set of all isomorphisms 
from $\CX$ to $\CX'$ is denoted by $\iso(\CX;\CX')$.
Note that we defined
isomorphism only for unordered objects\footnote{
If we wanted to have a definition of isomorphisms for objects in
general,
we should require that an isomorphisms of an object must also preserve the
order of the corresponding ground set, which is consistent with the
framework.}.
The automorphism group of 
an unordered object $\CX$, written $\aut(\CX)$, consists of those
$\sigma\in\sym(V)$ for which $\CX^\sigma=\CX$. For unordered objects,
the automorphism group $\aut(\CX)$
often has a natural meaning,
e.g.,
$\aut((\Theta\tau,\Delta\rho))=\Theta\cap\Delta$ and
$\aut((A,\Delta\rho))=\stab_\Delta(A)$
where $A\subseteq V$ and
$\Theta\tau,\Delta\rho\leq\lab(V)$.

The set $\iso(V;V')$ simply consists of all
bijections from $V$ to $V'$.
However, instead of talking about bijections, we use the notation $\iso(V;V')$
to indicate and stress that both $V$ and $V'$ are unordered sets
and that for every object~$\CX$, every map~$\phi \in \iso(V;V')$ is an
isomorphism from~$\CX$ to $\CX^\phi$.

\paragraph{Canonical Forms of Unordered Objects}
A \emph{canonical form} is a function $\cf$ that
assigns each unordered object $\CX\in\obj(V)$
an ordered object $\cf(\CX)\in\obj(\{1,\ldots,|V|\})$ such that:

\begin{enumerate}[(\textnormal{CF}1)]
   \item\label{ax:cf2} $\CX\cong\CY$ implies
  $\cf(\CX)=\cf(\CY)$ for all objects $\CX,\CY$, and
  \item\label{ax:cf1} $\cf(\CX)=\CX^\lambda$
  for some $\lambda\in\lab(V)$.
\end{enumerate}

Condition (CF1) is called \emph{isomorphism invariance}
and is equivalent to $\cf(\CX)=\cf(\CX^\phi)$
for all $\phi\in\iso(V;V')$
\footnote{If we wanted to define an action of~$\phi$ on ordered objects, we
should define it to act trivially. In that case isomorphism invariance could
also be written as 
$\cf(\CX)^\phi=\cf(\CX^\phi)$, which is more consistent with definitions that
follow.}.
Intuitively,
Condition (CF2) means that $\CX$ and $\cf(\CX)$ are isomorphic
if we would forget about the linear ordering of the ground set of $\cf(\CX)$.

\paragraph{Canonical Labelings of Unordered Objects}
A \emph{canonical labeling function} $\can$ is
a function that assigns each unordered object
$\CX\in\obj(V)$ a labeling coset $\can(\CX)=\Lambda\leq\lab(V)$
such that:

\begin{enumerate}[(\textnormal{CL}1)]
  \item\label{ax:cl1} $\can(\CX)=\phi\can(\CX^\phi)$
  for all $\phi\in\iso(V;V')$ and,
  \item\label{ax:cl2} $\can(\CX)=\aut(\CX)\pi$
  for some (and thus for all) $\pi\in\can(\CX)$.
\end{enumerate}

Again, Condition (CL1)
is called \emph{isomorphism invariance}.
Note that (CL1) is equivalent to
$\can(\CX)^\phi=\can(\CX^\phi)$.
Roughly speaking, this means that $\can$
is compatible with isomorphisms.
More precisely, this means that the following diagram commutes.
\begin{center}
\begin{tikzcd}
\CX \arrow{r}{\phi} \arrow{d}{\can} & \CX^\phi \arrow{d}{\can} \\
\can(\CX)\arrow{r}{\phi} & \can(\CX)^\phi
\end{tikzcd}
\end{center}
Condition (CL2) means that $\CX^\lambda$
for an arbitrary labeling $\lambda\in\can(\CX)$
does not depend on the choice of $\lambda$.

We give a connection
between isomorphisms, canonical forms
and canonical labelings.
Deciding isomorphism reduces to computing a canonical form.
The computation of a canonical form in turn
reduces to computing canonical labelings as seen next.
We claim that
$\cf(\CX):=\CX^\lambda$, where $\lambda\in\can(\CX)$ is
an arbitrary labeling, defines a canonical
form.
First, observe that the canonical form
is well defined and
does not depend on the choice of $\lambda\in\can(\CX)$
since (CL2) holds.
Next, we check (CF1) and (CF2).
Condition (CF2) holds by definition.
For (CF1) we need to show that
$\cf(\CX)=\cf(\CX^\phi)$
for all $\phi\in\iso(V;V')$.
Because of (CL1), we have
that for all $\lambda\in\can(\CX)$
there is a $\lambda'\in\can(\CX^\phi)$
such that $\lambda=\phi\lambda'$.
Therefore, $\cf(\CX)=X^\lambda=(\CX^\phi)^{\lambda'}=\cf(\CX^\phi)$ which was to
show.

The notion of canonical forms and canonical labelings can
be defined naturally also for an isomorphism-closed class of objects (e.g., all hypergraphs).
In a bootstrapping manner, we will devise canonical labeling algorithms for
certain classes of combinatorial objects
until we finally give an algorithm for objects in general.
We will in each instance state specifically what the requirements of (CL1) and
(CL2) are.\footnote{We remark that in some papers on canonization there is a
Condition~(CL3) which we do not require in our framework as we
see labeling cosets as objects themselves.}

\paragraph{Representation of Objects}
For an object $\CX\in\obj(V)$, we define 
the \emph{transitive closure} of~$\CX$, written $\tranCl(\CX)$,
as all objects recursively occurring in $\CX$, i.e.,
$\tranCl(X):=\{X\}$ for $X=v\in V$
or $X=\Delta\rho\leq\lab(V)$.
And we define 
$\tranCl(\CX):=\{\CX\}\cup \bigcup_{i\in[t]} \tranCl (X_i)$
for $\CX=\{X_1,\ldots,X_t\}$ or
$\CX=(X_1,\ldots,X_t)$.
As mentioned in the preliminaries, we represent labeling cosets
efficiently using generating sets.
An object itself can be efficiently represented
as colored directed acyclic graph over
the elements in its transitive closure.
With this representation,
the input size (i.e., encoding length)
of an object $\CX$ is polynomial in $|\tranCl(\CX)|+|V|+t_{\operatorname{max}}$
where $t_{\operatorname{max}}$ is the maximal length of a tuple appearing in
$\tranCl(\CX)$.

\paragraph{The Linear Ordering of Ordered Objects}
In contrast to
the previous paragraphs,
we will now consider ordered objects where
$V\subseteq\NN$.
Such objects appear in the following context.
When evaluating a canonical form, the resulting object is
ordered.
In order to compare canonical forms,
it will be an important task
to sort ordered objects in polynomial time.
To do so, we define a linear order ``$\prec$''
on objects over the natural numbers.
For two natural numbers (atoms) $X,Y\in\NN$,
we adapt the natural ordering, i.e.,
$X\prec Y$ if $X<Y$.
We inductively extend our definition.
For two sets $\CX=\{X_1,\ldots,X_s\}$ and
$\CY=\{Y_1,\ldots,Y_t\}$, we assume that the order ``$\prec$''
is already defined for the elements $X_i$ and $Y_j$.
Then we say $\CX\prec\CY$ if
$|\CX|< |\CY|$ or if $|\CX|=|\CY|$ and the smallest
element in $\CX\setminus\CY$ is smaller
than the smallest element in $\CY\setminus\CX$.
Let $\CX=(X_1,\ldots,X_s)$ and
$\CY=(Y_1,\ldots,Y_t)$
be two tuples for which ``$\prec$'' is already defined
for the entries. We say $\CX\prec\CY$
if $s$ is smaller than~$t$
or if~$s=t$ and for the smallest $i\in[t]$ for which
$X_i\neq Y_i$, we have that
$X_i\prec Y_i$.
We extend the order to labelings
over natural numbers.
For two permutations $\sigma_1,\sigma_2\in\sym(\{1,\ldots,|V|\})$
we say $\sigma_1\prec \sigma_2$ if there
is an $i\in \{1,\ldots,|V|\}$ such that $\sigma_1(i) <\sigma_2(i)$
and $\sigma_1(j)=\sigma_2(j)$ for all $1\leq j<i$.
Last but not least, we extend the definition to
labeling cosets $\Delta\rho,\Theta\tau\leq\lab(\{1,\ldots,|V|\})$.
We adapt the definition for sets, i.e.,
$\Delta\rho\prec\Theta\tau$ if
$|\Delta\rho|\leq|\Theta\tau|$ or if
$|\Delta\rho|=|\Theta\tau|$ and
the smallest element of $\Delta\rho\setminus \Theta\tau$
is smaller than the smallest element
of $\Theta\tau \setminus \Delta\rho$.
It is known that for
two labeling cosets $\Delta\rho,\Theta\tau\leq\lab(\{1,\ldots,|V|\})$
given by generating sets, the order
``$\prec$'' can be computed in time polynomial in $|V|$
(\cite{tree-width}, Corollary 22).
For completeness, we define
$X\prec Y\prec \CX\prec \CY$ for all integers $X\in\NN$,
all labeling cosets $Y=\Delta\rho\leq\lab(\{1,\ldots,|V|\})$,
all tuples $\CX$ and all sets~$\CY$. 
We write $\CX\preceq\CY$ if $\CX=\CY$ or $\CX\prec\CY$.

\begin{lem}\label{lem:prec}
The ordering ``$\prec$'' on pairs of ordered objects can be computed in polynomial time.
\end{lem}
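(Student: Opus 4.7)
The plan is to proceed by structural induction on the pair of objects being compared, processing their transitive closure bottom-up. The base cases are atoms: two integers are compared in constant time by the standard order on $\NN$, and two labeling cosets on $\{1,\ldots,|V|\}$ given by generating sets are compared in polynomial time by invoking \cite[Corollary 22]{tree-width}. Comparisons across different kinds (integer, labeling coset, tuple, set) are resolved in constant time by the blanket rule $X\prec Y\prec\CX\prec\CY$.

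For the inductive step on two tuples $\CX=(X_1,\ldots,X_s)$ and $\CY=(Y_1,\ldots,Y_t)$, I would first compare $s$ with $t$; if they agree, I would scan positions left to right and at each step recurse on the pair $(X_i,Y_i)$, halting at the first inequality. This requires at most $s$ recursive calls, each on a pair of strictly smaller rank than $(\CX,\CY)$.

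For two sets $\CX=\{X_1,\ldots,X_s\}$ and $\CY=\{Y_1,\ldots,Y_t\}$, after disposing of the cardinality case the question reduces to finding the smallest elements of $\CX\setminus\CY$ and of $\CY\setminus\CX$. The approach is to sort both sets with respect to $\prec$ using any polynomial-time comparison sort such as merge sort, each comparator call being a recursive invocation on a pair of strictly smaller rank. After the two sorted lists are in hand, the minima of the symmetric differences are extracted by a linear scan, and the final answer follows from one further recursive comparison of those minima.

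The main obstacle is controlling the total running time, since a naive top-down recursion can repeatedly re-sort the same subobjects and blow up super-polynomially. I would circumvent this by scheduling the work bottom-up along $\tranCl(\CX)\cup\tranCl(\CY)$: enumerate the subobjects in order of increasing rank, and whenever a set is encountered, sort it once and cache the sorted order for all subsequent use. Since $|\tranCl(\CX)\cup\tranCl(\CY)|$ is bounded by the combined encoding length, at most polynomially many sorts are performed, each comparator call costs polynomial time (including the labeling-coset base case), and the total number of comparator calls is polynomial; this yields the desired polynomial-time bound.
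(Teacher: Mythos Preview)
The paper does not prove this lemma; it is stated immediately after the recursive definition of~``$\prec$'' and the remark that two labeling cosets over $\{1,\ldots,|V|\}$ given by generating sets can be compared in polynomial time, and is evidently meant to follow directly from these. Your argument supplies exactly this structural induction and is correct in approach.

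The only place that deserves tightening is the claim that ``each comparator call costs polynomial time.'' Caching sorted orders alone does not suffice: if $\CX$ and $\CY$ are given as separate DAGs, a subobject $Z$ appearing in both need not be recognised by pointer equality, and then comparing, say, $(Z,Z,\ldots,Z,0)$ against $(Z,Z,\ldots,Z,1)$ forces a full recursive comparison of $Z$ with itself at every position; iterating this over depth yields exponential work. The fix is already implicit in your phrase ``bottom-up along $\tranCl(\CX)\cup\tranCl(\CY)$'': when forming this union level by level, identify equal subobjects (atoms first via the integer order and the cited coset comparison, then compound objects via their already-identified children), or equivalently memoize all pairwise comparison results. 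After that there are at most quadratically many distinct pairs to compare, each requiring only a linear scan over identified children plus the polynomial-time atom comparisons, and the overall polynomial bound follows.
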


Having defined an ordering for ordered objects
that can be efficiently evaluated, we can use it as follows.
While we may not be able to distinguish non-isomorphic objects $\CX$ and $\CY$ per se, whenever we are given labelings $\lambda\in\can(\CX)$
and $\gamma\in\can(\CY)$, we can order the objects by
ordering the ordered versions $\CX^\lambda$ and $\CY^{\gamma}$
w.r.t. ``$\prec$''.

\section{Canonization of Atoms and Tuples}\label{sec:atoms}

We are looking
for a canonical labeling function
for objects $\CX$
in the case where the object is a pair
of atoms $\CX=(v,\Delta\rho)$ consisting of a labeling coset and a distinguished vertex.

\problem{\canPoint\label{prob:CL:point}}
{$(v,\Delta\rho)\in\obj(V)$ where $v\in V$, $\Delta\rho\leq\lab(V)$ and $V$ is
an unordered set}
{(v,\Delta\rho)}
{(\phi(v),\phi^{-1}\Delta\rho)}
{\stab_\Delta(v)}

The reader may want to take a moment to convince themselves that 
for input objects $\CX=(v,\Delta\rho)$, the
conditions
(CL1) and
(CL2) stated here agree with the condition~(CL1) and (CL2)
described in the previous section.

\begin{lem}\label{lem:canPoint}
A function $\canPoint$ solving Problem~\ref{prob:CL:point} can be computed in polynomial time.
\end{lem}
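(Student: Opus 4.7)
The plan is to canonize $(v,\Delta\rho)$ by singling out, within the coset $\Delta\rho$, those labelings that send $v$ to the smallest possible integer, and returning precisely the set of all such labelings as $\Lambda$. Concretely, I would compute the orbit $v^\Delta$ and the pointwise stabilizer $\stab_\Delta(v)$ via the Schreier--Sims machinery listed in the preliminaries. Since $\rho$ is a bijection into $\{1,\ldots,|V|\}$, the set $\rho(v^\Delta)$ has a unique minimum; let $u\in v^\Delta$ be its preimage. Extract from the orbit computation a representative $\delta_0\in\Delta$ with $\delta_0(v)=u$, and return $\Lambda:=\stab_\Delta(v)\,\delta_0\rho$. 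All four steps run in time polynomial in $|V|$.

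Condition (CL2) is immediate from the construction: every $\pi\in\Lambda$ has the form $\sigma\delta_0\rho$ for some $\sigma\in\stab_\Delta(v)$, so $\stab_\Delta(v)\pi=\stab_\Delta(v)\delta_0\rho=\Lambda$.

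The substantive part is condition (CL1). Given $\phi\in\iso(V;V')$, set $v':=\phi(v)$ and write $\phi^{-1}\Delta\rho=\Delta'\rho'$ with $\Delta':=\phi^{-1}\Delta\phi$ and $\rho':=\phi^{-1}\rho$. Let $\delta_0'$ be the representative that the algorithm picks on input $(v',\Delta'\rho')$. I plan to verify three short identities under the paper's left-to-right composition convention: (i) $v'^{\Delta'}=\phi(v^\Delta)$, (ii) $\rho'(\phi(w))=\rho(w)$ for every $w\in V$, and (iii) $\phi\stab_{\Delta'}(v')=\stab_\Delta(v)\phi$. Identities (i) and (ii) show that the algorithm's choice of representative is compatible with $\phi$: the labeling $\phi\delta_0'\rho'$ lies in $\Delta\rho$ and sends $v$ to the same integer as $\delta_0\rho$, hence both lie in the same left coset of $\stab_\Delta(v)$. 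Combined with (iii) this yields $\phi\canPoint(v',\Delta'\rho')=\stab_\Delta(v)\phi\delta_0'\rho'=\stab_\Delta(v)\delta_0\rho=\canPoint(v,\Delta\rho)$, which is exactly (CL1).

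No new algorithmic idea beyond orbit and stabilizer computation is needed; the main obstacle is purely notational. It lies in the bookkeeping for (CL1): unraveling how $\phi^{-1}\Delta\rho$ factors into a subgroup of $\sym(V')$ times a labeling of $V'$, and checking that the ``minimize the $\rho$-value of the image of $v$'' rule is preserved under this transformation. Once identities (i)--(iii) are established, isomorphism invariance drops out cleanly.
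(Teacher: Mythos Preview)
Your proposal is correct and follows essentially the same approach as the paper: both return the subcoset of $\Delta\rho$ consisting of all labelings that send $v$ to the smallest possible integer, verified via orbit/stabilizer computations and the Schreier--Sims algorithm. The only cosmetic difference is that the paper expresses the stabilizer on the ordered side as $\rho^*\stab_{\Delta^{\ord}}(v^{\ord})$ (emphasizing that Schreier--Sims is applied to an ordered object) whereas you write it as $\stab_\Delta(v)\,\delta_0\rho$; these are the same coset, and your (CL1) bookkeeping via identities (i)--(iii) is a clean unpacking of the paper's more compressed invariance argument.
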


\begin{proof}
\algorithmDAN{\canPoint(v,\Delta\rho)}
\begin{cs}
\item~\vspace{-0.4cm}\\
Choose \emph{(arbitrarily)} $\rho^*\in\Delta\rho$ such that
$v^{\rho^*}\in\NN$ is minimal.\\
Set $v^\ord:=v^{\rho^*}$.\\
\emph{($v^\ord$ is a minimal image of $v$ under $\Delta\rho$.)}\\
Set $\Delta^\ord:=(\Delta\rho)^\rho$.\\
Return $\Lambda:=\rho^*\stab_{\Delta^\ord}(v^\ord)$.
\end{cs}

\begin{cl1}
Assume for some bijection~$\phi\in\iso(V;V')$ that we have
$\phi(v),\phi^{-1}\Delta\rho$ instead of $v,\Delta\rho$ as an input
of the algorithm.
Observe that $\phi^{-1}\rho$ is a coset representative
of $\phi^{-1}\Delta\rho=\phi^{-1}\Delta\phi(\phi^{-1}\rho)$.
We argue that the algorithm outputs $\phi^{-1}\Lambda$
instead of $\Lambda$.
Since $v^{\rho^*}=\phi(v)^{\phi^{-1}\rho^*}$,
we now obtain $\phi^{-1}\rho^*\delta^\ord$ instead of $\rho^*$
where $\delta^\ord$ is some element in $\stab_{\Delta^\ord}(v^\ord)$
(the choices for~$\rho^*$ vary up to elements in
$\stab_{\Delta^\ord}(v^\ord)$).
The computed objects~$ v^\ord=
\phi(v)^{\phi^{-1}\rho^*\delta^\ord}$ and $\Delta^\ord
=(\phi^{-1}\Delta\rho)^{\phi^{-1}\rho}$ remain unchanged.
The computed group $\stab_{\Delta^\ord}(v^\ord)$ remains unchanged.
Finally, the algorithm returns $\phi^{-1}\Lambda$
instead of $\Lambda$.
This gives $\canPoint(\phi(v),\phi^{-1}\Delta\rho)=
\phi^{-1}\Lambda=\phi^{-1}\canPoint(v,\Delta\rho)$
which was to show.
\end{cl1}

\begin{cl2}
This property holds by construction
since~$\rho^*\stab_{\Delta^\ord}(v^\ord)= \stab_\Delta(v)\rho^*$.
\end{cl2}

\begin{runtime}
The pointwise stabilizer $\stab_{\Delta^\ord}(v^\ord)$
can be computed with the
Schreier-Sims algorithm in polynomial time.
\end{runtime}
\end{proof}

We want to point out here that the Schreier-Sims-algorithm is applied to ordered objects only, and thus we do not need to worry about canonicity of its output.

We continue with the canonization of more interesting objects which we later use as subroutine
to canonize pairs of labeling cosets.
A (partial) \emph{matching} is a set $M\subseteq V_1\times V_2$
such that
for all $(v_1,v_2),(u_1,u_2)\in M$
with $(v_1,v_2)\neq(u_1,u_2)$, it holds that
$v_1\neq u_1$ and $v_2\neq u_2$.

\problem{\canMatch\label{prob:CL:Match}}
{$(M,\Delta\rho)\in\obj(V)$ where $M\subseteq V_1\times
V_2$ is a matching, $\Delta\rho\leq\lab(V)$,
$\Delta\leq\stab(V_1,V_2)$ and $V=V_1\cupdot V_2$
is an unordered set}
{(M,\Delta\rho)}
{(M^\phi,\phi^{-1}\Delta\rho)}
{(\aut(M)\cap\Delta)}

The reader may again want to take a moment to convince themselves that 
for input objects $\CX=(M,\Delta\rho)$, the
Conditions
(CL1) and
(CL2) stated here agree with Condition~(CL1) and (CL2)
described in the previous section.

\begin{lem}\label{lem:canMatch}
A function $\canMatch$ solving Problem~\ref{prob:CL:Match} can be computed in time
$2^{\CO(k_2)}|V|^{\CO(1)}$
where $k_2$ is the size of the largest $\Delta$-orbit of $V_2\subseteq V$.
\end{lem}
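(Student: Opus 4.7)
The plan is to exploit that each $\Delta$-orbit on $V_2$ has size at most $k_2$, which makes brute-force enumeration over subsets within each such orbit affordable. Let $V_2^M := \{v_2 \in V_2 \mid \exists v_1\in V_1,\ (v_1,v_2)\in M\}$ and, for each $\Delta$-orbit $B \subseteq V_2$, set $B^M := B \cap V_2^M$. Any $\sigma \in \aut(M)\cap \Delta$ must setwise preserve $V_2^M$ and hence each $B^M$, so the algorithm proceeds in two phases: first stabilize each $B^M$ setwise in a canonical manner, then individualize matched pairs via $\canPoint$. A key observation is that once a matched $v_2$ is individualized, its unique partner $v_1 := M^{-1}(v_2) \in V_1$ is forced, so we never need to branch on the (possibly large) $\Delta$-orbits of $V_1$, which explains why $k_1$ does not appear in the running time.

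\textbf{Phase 1 (setwise stabilization of $V_2^M$).} We process the $\Delta$-orbits on $V_2$ one at a time, shrinking the coset $\Delta\rho$ after each step. For the current orbit $B$, we enumerate all $2^{|B|}\le 2^{k_2}$ subsets $S \subseteq B$; using a polynomial-time membership test we check which $S$ lie in the orbit of $B^M$ under the action of $\Delta$ on subsets of $B$, and pick the $\prec$-minimal such $S^*$ via the ordering of \cref{lem:prec}. We then refine $\Delta\rho$ to the subcoset of labelings that sends $B^M$ to $S^*$; this is a coset intersection involving the setwise stabilizer of $S^*$ in $\Delta|_B$, which (since $|B|\le k_2$ is small) can be computed from the list of candidate subsets together with Schreier-Sims in $|V|^{O(1)}$ time. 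Iterating over at most $|V|$ orbits yields $2^{O(k_2)}|V|^{O(1)}$ total time and leaves a subcoset that setwise stabilizes $V_2^M$.

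\textbf{Phase 2 (individualization of matched pairs).} The matching $M$ is now a bijection between the stabilized sets $V_1^M$ and $V_2^M$. We repeatedly pick the $\prec$-minimal matched $v_2 \in V_2^M$ (under the current coset image, which is a canonical choice) and apply $\canPoint(v_2, \Delta\rho)$ from \cref{lem:canPoint} to individualize it. The unique partner $v_1 := M^{-1}(v_2) \in V_1^M$ is then forced, and we apply $\canPoint(v_1, \Delta\rho)$ as well to propagate the constraint. Each such iteration runs in polynomial time, and the process terminates after at most $|V_2^M|$ steps with a coset of the form $(\aut(M)\cap \Delta)\pi$ for a suitable $\pi$.

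Isomorphism invariance (CL1) follows step-by-step from the canonicity of the $\prec$-minimal choices in Phase 1 and of the $\canPoint$ calls in Phase 2, both of which commute with bijections $\phi \in \iso(V;V')$ in the sense used in the proof of \cref{lem:canPoint}. Condition (CL2) holds because every refinement intersects the current group with either a setwise or pointwise stabilizer of a subobject of $M$, so at the end the group part of the returned coset is exactly $\aut(M) \cap \Delta$. The main obstacle is making sure that Phase 1's brute force stays additive across the orbits of $V_2$ rather than multiplicative; this is ensured by shrinking $\Delta\rho$ sequentially, so that at any time we brute-force over a single orbit of size $\le k_2$, and each such orbit contributes a factor of only $2^{O(k_2)}$ to the overall running time.
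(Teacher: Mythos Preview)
There is a genuine gap in Phase~2. Repeatedly applying $\canPoint$ to matched vertices $v_2$ and their partners $v_1$ does \emph{not} yield a coset of $\aut(M)\cap\Delta$: it yields a coset of the \emph{pointwise} stabilizer of $V_1^M\cup V_2^M$ inside $\Delta$. These two groups differ whenever some $\delta\in\Delta$ permutes the matched pairs nontrivially. Concretely, take $V_1=\{x,y\}$, $V_2=\{a,b\}$, $M=\{(x,a),(y,b)\}$, $\Delta=\langle(ab),(xy)\rangle$ and any $\rho$. Here $\aut(M)\cap\Delta=\{\mathrm{id},(xy)(ab)\}$ has order~$2$, so (CL2) forces the output to be a coset of size~$2$. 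But Phase~1 is vacuous (the unique orbit $B=V_2$ satisfies $B^M=B$), and Phase~2 individualizes $a$ (or $b$), then its partner, then the remaining pair, terminating with a singleton coset. Your sentence ``the process terminates \ldots\ with a coset of the form $(\aut(M)\cap\Delta)\pi$'' is therefore false. Relatedly, the claim that ``the $\prec$-minimal matched $v_2$'' is a canonical choice fails for the same reason: $a$ and $b$ lie in one $\Delta$-orbit and have identical image sets under the current coset, so no tie-breaking is available without branching.

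This is exactly the difficulty the paper's transitive case is designed to handle: it branches over all left cosets of a halving stabilizer $\Psi^\ord$, recurses on each branch, retains only those branches whose canonical form is $\prec$-minimal, and returns the smallest coset containing all of them. That final aggregation step is what reconstructs the full group $\aut(M)\cap\Delta$; a purely sequential refinement by point stabilizers cannot do this. Incidentally, Phase~1 also has two smaller issues: you apply ``$\prec$'' to subsets $S\subseteq B\subseteq V$ although $V$ is unordered (so $\prec$ is undefined there; you presumably mean the image sets under $\Delta\rho$), and you do not fix a canonical order in which to process the $\Delta$-orbits, which matters when the actions on distinct orbits are linked (e.g.\ $\Delta=\langle(ab)(cd)\rangle$). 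Both are repairable, but the Phase~2 problem is not, short of reintroducing branching.
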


\begin{proof}
For the purpose of recursion, we use an additional input
parameter. Specifically, we use
a subset $A\subseteq V_2$ such that $M\subseteq V_1\times A$ and
$\Delta\leq\stab(V_1,A)$. Initially, we set $A=V_2$.

\algorithmDAN{\canMatch(M,A,\Delta\rho)}
\begin{cs}
\case{$|M|= 0$}
Return $\Lambda:=\Delta\rho$.
\case{$|A|= 1$}
\emph{(Because of $|M|\geq 1$, it holds that $|M|= 1$.)}\\
Assume $M=\{(v_1,v_2)\}$.\\
Return
$\Lambda:=\canPoint(v_1,\Delta\rho)$.
\case{$\Delta$ is intransitive on $A$}~\\
Partition $A=A_1\cupdot A_2$ where
$A_1$ is a $\Delta$-orbit such that
$A_1^\rho$ is minimal.\\
\minimal\\
Partition $M=M_1\cupdot M_2$
where $M_i:=\{(v_1,v_2)\in M\mid v_2\in A_i\}$ for $i=1,2$.\\
Compute
$\Lambda_1:=\canMatch(M_1,A_1,\Delta\rho)$ recursively.\\
Recurse and return
$\Lambda_2:=\canMatch(M_2,A_2,\Lambda_1)$.
\case{$\Delta$ is transitive on $A$}~\\
\transitive{\canMatch}{M}
\end{cs}

\begin{cl1}
Assume we have $M^\phi,A^\phi,\phi^{-1}\Delta\rho$
instead of $M,A,\Delta\rho$ as an input.
Observe that $\phi^{-1}\rho$ is a coset representative
for $\phi^{-1}\Delta\rho$.
We show that the algorithm outputs $\phi^{-1}\Lambda$
instead of $\Lambda$.

In the base case $|M|=0$, we
return $\phi^{-1}\Delta\rho$ instead of $\Delta\rho$.

Now, consider the case $|A|=1$.
We obtain $\phi(v_1)$ and $\phi(v_2)$
instead of $v_1$ and $v_2$, respectively.
By (CL1) of $\canPoint$, we return $\phi^{-1}\Lambda$
instead of $\Lambda$.

In the intransitive case, we obtain
$A^\phi=A_1^\phi\cupdot A_2^\phi$ as a partition
since $A_1^\rho=A_1^{\phi\phi^{-1}\rho}$.
By induction, we obtain $\phi^{-1}\Lambda_1$
instead of $\Lambda_1$.
Again, by induction, we return
$\phi^{-1}\Lambda_2$
instead of~$\Lambda_2$.

In the transitive case, observe that $A^\ord$ and $\Delta^\ord$
remain unchanged since
$A^{\phi\phi^{-1}\rho}=A^{\rho}$
and $(\phi^{-1}\Delta\rho)^{\phi^{-1}\rho}=(\Delta\rho)^{\rho}$.
Therefore, we still obtain $A^\ord=A^\ord_1\cupdot A^\ord_2$
and also $\Psi^\ord$.
We obtain cosets $\phi^{-1}\rho\delta_i^\ord\Psi^\ord$
instead of $\rho\delta^\ord_j\Psi^\ord$ since
the indexing is arbitrary.
The calls we do now are
of the form
$\canMatch(M^\phi,A^\phi,\phi^{-1}\delta^\ord_i\Psi^\ord)$
instead of $\canMatch(M,A,\delta_j^\ord\Psi^\ord)$.
By induction, we get $\phi^{-1}\Delta_i\rho_i$
instead of $\Delta_j\rho_j$.
We obtain $\phi^{-1}\rho_i$
instead of $\rho_j$.
But since $(M,\Delta\rho)^{\rho_i}
=(M^\phi,\phi^{-1}\Delta\rho)^{\phi^{-1}\rho_i}$
we get the same ordered sequence.
The computation of $\Lambda$ is isomorphism invariant
and
therefore we return $\phi^{-1}\Lambda$
instead of $\Lambda$.
\end{cl1}

\begin{cl2}
In the Case that $M=\emptyset$,
it holds $\aut(M)\cap\Delta=\Delta$.

In Case $|A|=1$ and thus $|M|= 1$, it holds that
$\aut(M)\cap\Delta$
is actually equal to $\stab_\Delta(v_1)$ since
$\{v_2\}=A$ is already stabilized by $\Delta$.

For the intransitive case, we have $\Lambda_1=
(\aut(M_1)\cap\Delta)\pi_1$ for some $\pi_1\in\Lambda_1$ by induction.
Again, by induction, the returned
coset is $\Lambda_2=(\aut(M_2)\cap\aut(M_1)\cap\Delta)\pi_2$
for some $\pi_2\in\Lambda_2$.
Since $(M_1,M_2)$ is a $\Delta$-invariant
\emph{ordered} partition of $M$,
it holds that $\Lambda_2=(\aut(M)\cap\Delta)\pi_2$.

For the transitive case,
observe that (CL1)
of $\canMatch$
implies
$(\aut(M)\cap\Delta)\pi\subseteq\Lambda$
for some $\pi\in\Lambda$.
Next, we show the reversed inclusion
$\Lambda\subseteq(\aut(M)\cap\Delta)\pi$.
We need to show that  $\rho_i\rho_j^{-1}$
is an element in $\aut(M)\cap\Delta$ for all $i,j\in[r]$.
The membership $\rho_i\rho_j^{-1}\in\aut(M)$
follows from the fact that $M^{\rho_i}=M^{\rho_j}$
and the membership $\rho_i\rho_j^{-1}\in\Delta$ follows
from the fact that $(\Delta\rho)^{\rho_i}=(\Delta\rho)^{\rho_j}$.

\end{cl2}

\begin{runtime}
Let $A^*\subseteq A\subseteq V_2$ be a $\Delta$-orbit of maximal size.
We claim that the maximum number of recursive calls~$R(|A^*|,|A|)$ is at most~$
T:=2^{6|A^*|}|A|^2$.
In the intransitive case, this is easy to see by induction:
\begin{equation*}
R(|A^*|,|A|)\leq 1+\sum_{j\in[2]} R(|A^*|,|A_j|)
\overset{\text{induction}}{\leq}
1+2^{6|A^*|}(|A_1|^2+|A_2|^2)\leq T.
\end{equation*}
In the transitive case, it holds that $A^*=A$
and $s\leq 2^{|A|}$ and we obtain
\begin{equation*}
R(|A|,|A|)\leq 1+ s\cdot R(\lceil|A|/2\rceil,|A|)
\overset{\text{induction}}{\leq}
1+2^{4|A|+3}|A|^2
\overset{2\leq |A|}{\leq}
T.
\end{equation*}
We consider the running time for one single call without recursive costs. All steps are
polynomial time computable, except the computation of $\Psi^\ord$ in the transitive case.
The group $\Psi^\ord$ can be computed
in time polynomial in the index and $|V|$,
i.e., ${2^{\CO(|A^*|)}|V|^{\CO(1)}}$.
In total, we have a running time of at most
$T\cdot 2^{\CO(|A^*|)}|V|^{\CO(1)}
\subseteq 2^{\CO(k_2)}|V|^{\CO(1)}$
where $k_2$ is the size of the largest $\Delta$-orbit of $A\subseteq V_2$.
\end{runtime}
\end{proof}

As next step, we demonstrate how to compute canonical labelings
for objects $\CX=(\Theta\tau,\Delta\rho)$ consisting of a pair
of labeling cosets.

\problem{\canInt\label{prob:CL:Int}}
{$(\Theta\tau,\Delta\rho)\in\obj(V)$ where $\Theta\tau,\Delta\rho\leq\lab(V)$
and $V$ is an unordered set}
{(\Theta\tau,\Delta\rho)}
{(\phi^{-1}\Theta\tau,\phi^{-1}\Delta\rho)}
{(\Theta\cap\Delta)}

In fact, Condition (CL2) given here coincides with the general
Condition (CL2) stated in the preliminaries, i.e.,
$\canInt(\Theta\tau,\Delta\rho)=\aut((\Theta\tau,\Delta\rho))\pi$ for some $\pi$.
The problem can be seen as a canonization analogue to
the group-intersection problem which is often used
for the purpose of designing isomorphisms algorithms.
The next example shows how the problem can be
used for canonization purposes.

\begin{exa}\label{exa:int}
Let $G=(V,E,P)$ be a graph with
an \emph{ordered} partition $P=(E_B,E_R)$
of the edges $E=E_B\cupdot E_R$.
The partition can be seen as an edge coloring
which has to be preserved by automorphisms
of the graph
(see
Figure~\ref{fig:distinguishable:colors}).
\begin{figure}[t]
\begin{center}
\begin{tikzpicture}[scale=0.7]
\foreach \i in {1,...,6}{
	\node[draw,circle,scale=0.5,color=black]
	(\i) at (360/6*\i:1cm)
	{};}
\path[draw,\myRed,ultra thick]
(1) -- (2) -- (3) -- (4) -- (5) -- (6) -- (1);
\path[draw,\myBlue,ultra thick]
(1) -- (3) -- (5) -- (1);
\end{tikzpicture}
\caption{A graph $G=(V,E,P)$ with an \emph{ordered} partition $P=(
\bm{\textcolor{\myBlue}{E_B}},
\bm{\textcolor{\myRed}{E_R}}
)$.\label{fig:distinguishable:colors}}
\end{center}
\end{figure}
Let $\can,\canInt$ be canonical labeling functions
and define $\Delta_B\rho_B:=\can(E_B)$
and $\Delta_R\rho_R:=\can(E_R)$.
Then $\canGraph(G):=\canInt(\Delta_B\rho_B,\Delta_R\rho_R)$ defines
a canonical labeling for graphs with edges colored blue and red.
\end{exa}

\begin{lem}\label{lem:canInt}
A function $\canInt$ solving
\cref{prob:CL:Int}
can be computed in time
$2^{\CO(k)}|V|^{\CO(1)}$
where $k$ is the size of the largest $\Delta$-orbit of $V$.
\end{lem}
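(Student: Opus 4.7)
The plan is to reduce the canonization of the pair $(\Theta\tau,\Delta\rho)$ to a matching canonization via \cref{lem:canMatch}. The key observation is that if we duplicate the ground set into two disjoint copies, then the ``diagonal'' automorphisms of a perfect matching between the two copies naturally single out $\Theta\cap\Delta$ within the direct product $\Theta\times\Delta$.

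Concretely, I would form two disjoint copies $V_1,V_2$ of $V$, say $V_j:=V\times\{j\}$, with canonical bijections $\iota_j:V\to V_j$. Set $V^*:=V_1\cupdot V_2$ and define the ``identity'' matching $M:=\{(\iota_1(v),\iota_2(v))\mid v\in V\}\subseteq V_1\times V_2$. Let $\Theta$ act on $V^*$ via $\iota_1$ (trivially on $V_2$) and $\Delta$ via $\iota_2$ (trivially on $V_1$), together generating $\Delta^*:=(\iota_1^{-1}\Theta\iota_1)\cdot(\iota_2^{-1}\Delta\iota_2)\leq\stab(V_1,V_2)$. Choose $\rho^*\in\lab(V^*)$ by $\rho^*(\iota_1(v)):=\tau(v)$ and $\rho^*(\iota_2(v)):=\rho(v)+|V|$; a quick check shows that the coset $\Delta^*\rho^*$ depends only on $\Theta\tau$ and $\Delta\rho$, not on the chosen representatives. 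A short calculation yields $\aut(M)\cap\Delta^*=\{\iota_1^{-1}\sigma\iota_1\cdot\iota_2^{-1}\sigma\iota_2\mid\sigma\in\Theta\cap\Delta\}$, the diagonal embedding of $\Theta\cap\Delta$. I would then invoke \cref{lem:canMatch} to compute $\Lambda^*:=\canMatch(M,V_2,\Delta^*\rho^*)$ and return $\Lambda:=\Lambda^*\downarrow_{V_1}$, viewed as a coset in $\lab(V)$ via $\iota_1$.

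Verification of the two conditions is then short. For (CL2), by \cref{lem:canMatch} we have $\Lambda^*=(\aut(M)\cap\Delta^*)\pi^*$ for some $\pi^*\in\Lambda^*$, and restriction $\downarrow_{V_1}$ projects the diagonal subgroup isomorphically onto $\Theta\cap\Delta$, so $\Lambda=(\Theta\cap\Delta)\pi$ for some $\pi$. For (CL1), given $\phi\in\iso(V;V')$, I would extend to $\phi^*\in\iso(V^*;V'^*)$ by the rule $\phi^*\circ\iota_j=\iota_j'\circ\phi$; this sends the data $(M,\Delta^*\rho^*)$ built for $V$ to the analogous data for $V'$, so applying (CL1) of $\canMatch$ together with the fact that the restriction $\downarrow_{V_1}$ commutes with $\phi^*$ yields the desired identity. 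For the running time, the largest $\Delta^*$-orbit contained in $V_2$ is, via $\iota_2$, exactly the largest $\Delta$-orbit in $V$, so $\canMatch$ runs in time $2^{\CO(k)}(2|V|)^{\CO(1)}=2^{\CO(k)}|V|^{\CO(1)}$. The main subtlety will be to verify carefully that $\Delta^*\rho^*$ is genuinely representative-independent and that $\phi^*$ and $\downarrow_{V_1}$ commute through the reduction, so that isomorphism invariance transfers cleanly from $\canMatch$ to $\canInt$.
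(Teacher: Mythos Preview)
Your proposal is correct and follows essentially the same approach as the paper: both reduce to $\canMatch$ by forming two disjoint copies of $V$, letting $\Theta\tau$ act on one copy and $\Delta\rho$ on the other, pairing corresponding vertices by a perfect matching, and then restricting the resulting canonical labeling back to a single copy. The only differences are cosmetic (the paper reuses $V$ itself as one of the two parts and restricts to the $\Delta$-side rather than the $\Theta$-side), and your identification of the key running-time observation---that the relevant orbit bound $k_2$ in \cref{lem:canMatch} is the largest $\Delta$-orbit---matches the paper's.
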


\begin{proof}
Let $\tilde V:=\{\tilde v_1,\ldots,\tilde v_{|V|}\}$
be a set of size $|V|$ disjoint from $V$.
The set $\tilde V$ is essentially a copy of~$V$.
Define $U:=\tilde V\cupdot V$.
Let $\Delta_U\rho_U\leq\lab(U)$ be the labeling coset
on $U$
obtained by
$\Theta\tau$ acting on $\tilde V$
and $\Delta\rho$ acting on $V$.
More formally, we define $\Delta_U\rho_U:=\{
\lambda_U\in\lab(U)\mid\exists
\gamma\in\Theta\tau,\lambda\in\Delta\rho:\text{ for all }
i,j\in \{1,\ldots,|V|\} \text{ we have }
\lambda_U(\tilde v_i)=\gamma(v_i)+|V|$
and 
$\lambda_U(v_j)=\lambda(v_j)\}$.
Define a matching $M:=\{(\tilde v_i,v_i)\mid i\in \{1,\ldots,|V|\}\}$
by pairing corresponding vertices.
Define $\Lambda_U:=\canMatch(M,\Delta_U\rho_U)$.
We claim that $\Lambda:=\Lambda_U\downarrow_V$
defines
a canonical labeling for $(\Theta\tau,\Delta\rho)$
where ``$\downarrow$''
denotes the induced labeling coset
(as defined at the beginning of Section~\ref{sec:comb:objs:and:lab:cos}).

\begin{cl1}
Assume we have $\phi^{-1}\Theta\tau,\phi^{-1}\Delta\rho$
instead of $\Theta\tau,\Delta\rho$ as an input.
Following the construction, we obtain $M^{\phi_U}$ instead of $M$
and $\phi_U^{-1}\Delta_U\rho_U$ instead of $\Delta_U\rho_U$
for some $\phi_U$ with $\phi_U|_V=\phi$.
By (CL1) of $\canMatch$, we obtain $\phi_U^{-1}\Lambda_U$
instead of $\Lambda_U$.
Therefore, we obtain $\phi^{-1}\Lambda$
instead of $\Lambda$.
\end{cl1}

\begin{cl2}
We need to show that $(\aut(M)\cap\Delta_U)|_V=\Theta\cap\Delta$.
The inclusion $\Theta\cap\Delta\subseteq(\aut(M)\cap\Delta_U)|_V$
follows from (CL1) of this reduction.
We thus need to show the reversed inclusion
$(\aut(M)\cap\Delta_U)|_V\subseteq\Theta\cap\Delta$.
So assume $\delta_U\in\aut(M)\cap\Delta_U$.
Let $f:\tilde V\to V$ be the bijection
such that $f(\tilde v_i)= v_i$.
Since $\delta_U\in\Delta_U$,
there are some $\theta\in\Theta,\delta\in\Delta$ such that
$f(\delta_U(\tilde v_i))=\theta(v_i)$
and $\delta_U(v_i)=\delta(v_i)$ for all $i\in \{1,\ldots,|V|\}$.
Since $\delta_U\in\aut(M)$, it holds that
$f(\delta_U(\tilde v_i))=\delta_U(f(\tilde v_i))=\delta_U(v_i)$
for all $i\in\{1,\ldots,|V|\}$.
Both together imply that $\theta(v_i)=f(\delta_U(\tilde v_i))=\delta_U(v_i)=\delta(v_i)$
for all $v_i\in V$.
Thus $\delta_U|_{V}\in\Theta\cap\Delta$.
\end{cl2}

\begin{runtime}
Observe that the exponential term in the running time of $\canMatch$ just depends
on the size of the largest $\Delta_U$-orbit of $V\subseteq U$
which is simply the size of the largest $\Delta$-orbit
of $V$.
\end{runtime}
\end{proof}

An algorithm similar to the one just described can be found~\cite{miller1983isomorphism}.
So far, we are able to canonize a pair of two atoms.
This is already sufficient to canonize tuples.

\paragraph{Canonization
of Tuples using Iterated Instances}
The three canonical labeling functions we have defined so far take two inputs,
namely some object as first input (a point/a matching/a labeling coset) and a labeling coset as second input.
We want to extend the definition
to more arguments replacing the first input by a tuple of objects. This is in analogy to our definition of the stabilizer
function $\stab_{-}(-)$.
Let $\can$ be a canonical labeling function
for pairs of objects $(\CX,\Delta\rho)$
where $\CX\in\obj(V)$ and $\Delta\rho\leq\lab(V)$.
For an object $X_1$,
we define $\can(X_1;\Delta\rho)$
as $\can(X_1,\Delta\rho)$.
Inductively for $t\geq 2$, we define
the \emph{iterated instance}
$\can(X_1,\ldots,X_t;\Delta\rho)$
as $\can(X_2,\ldots,X_t;\can(X_1,\Delta\rho))$.
To justify the definition,
let
$X_1,\ldots,X_t$ be a sequence of objects
and let $\Delta\rho\leq\lab(V)$
be a labeling coset.
Then
$\can(X_1,\ldots,X_t;\Delta\rho)$ defines a
canonical labeling for
the tuple $(X_1,\ldots,X_t,\Delta\rho)$.
In order to compute canonical labelings
for tuples, it is therefore sufficient
to canonize pairs.

\section{Object Replacement}\label{sec:obj:replacement}
A crucial advantage of the framework we have defined is
that labeling
cosets are objects themselves.
The next lemma shows that for a given set $\CX=\{X_1,\ldots,X_t\}$ of pairwise
isomorphic objects $X_i$, we can replace the objects $X_i$ by their
canonical labeling cosets without losing or introducing global symmetries of the
object $\CX$.
The benefit is that the canonical labelings of the objects $X_i$ can be
computed using a recursive approach.

\begin{lem}[Object replacement]\label{lem:rep}
Let $\CX=\{X_1,\ldots,X_t\}$ be an object and let
$\can,\canSet$ be canonical labeling functions and let
$\Delta_i\rho_i:=\can(X_i)$ for each $i\in[t]$ and
$\CX^\set:=\{\Delta_1\rho_1,\ldots,\Delta_t\rho_t\}$.
Assume that
$X_i^{\rho_i}=X_j^{\rho_j}$ for all $i,j\in [t]$.
Then $\canObj(\CX):=\canSet(\CX^\set)$ defines a canonical labeling
for $\CX$.
\end{lem}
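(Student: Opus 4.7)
The plan is to verify conditions (CL1) and (CL2) for $\canObj(\CX) := \canSet(\CX^\set)$ by reducing each to the corresponding property of $\can$ applied to the atoms $X_i$ and of $\canSet$ applied to the set $\CX^\set$. The key auxiliary observation I will establish first is a covariance identity: for every bijection $\phi : V \to V'$ one has $(\CX^\phi)^\set = (\CX^\set)^\phi$. Indeed, (CL1) of $\can$ yields $\can(X_i^\phi) = \phi^{-1}\can(X_i) = \phi^{-1}\Delta_i\rho_i$ for every $i$, and this is precisely $(\Delta_i\rho_i)^\phi$ by the definition of the $\phi$-action on labeling-coset atoms. Once this identity is in place, (CL1) for $\canObj$ is immediate from (CL1) of $\canSet$: one obtains $\canObj(\CX^\phi) = \canSet((\CX^\phi)^\set) = \canSet((\CX^\set)^\phi) = \phi^{-1}\canSet(\CX^\set) = \phi^{-1}\canObj(\CX)$.

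For (CL2), (CL2) of $\canSet$ gives $\canSet(\CX^\set) = \aut(\CX^\set)\pi$ for some $\pi \in \canSet(\CX^\set)$, so the whole task reduces to proving the equality $\aut(\CX) = \aut(\CX^\set)$. The inclusion $\aut(\CX) \subseteq \aut(\CX^\set)$ falls out of the covariance identity applied with $\phi = \sigma \in \aut(\CX)$: from $\CX^\sigma = \CX$ one gets $(\CX^\set)^\sigma = (\CX^\sigma)^\set = \CX^\set$, hence $\sigma \in \aut(\CX^\set)$. No hypothesis on the values $X_i^{\rho_i}$ is needed for this direction.

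The reverse inclusion $\aut(\CX^\set) \subseteq \aut(\CX)$ is where I expect the main obstacle and where the hypothesis $X_i^{\rho_i} = X_j^{\rho_j}$ is essential. Given $\sigma \in \aut(\CX^\set)$, I fix a permutation $\alpha$ of $[t]$ with $\sigma^{-1}\Delta_i\rho_i = \Delta_{\alpha(i)}\rho_{\alpha(i)}$; in particular $\sigma^{-1}\rho_i \in \Delta_{\alpha(i)}\rho_{\alpha(i)} = \can(X_{\alpha(i)})$. Let $Y$ denote the common value of $X_i^{\rho_i}$ granted by the hypothesis. By (CL2) of $\can$, every labeling in $\can(X_{\alpha(i)})$ sends $X_{\alpha(i)}$ to the same image $Y$, so $(X_{\alpha(i)})^{\sigma^{-1}\rho_i} = Y = X_i^{\rho_i}$. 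Cancelling the invertible $\rho_i$ on both sides yields $X_i^\sigma = X_{\alpha(i)}$, so $\sigma$ permutes the $X_i$ and therefore lies in $\aut(\CX)$. Without the hypothesis, $\sigma$ could permute the cosets $\Delta_i\rho_i$ while moving $X_i$ to an object that merely has an isomorphic canonical form to some $X_j$ without actually equalling $X_j$; the hypothesis blocks this scenario by forcing all $X_i$ to share the single canonical form $Y$, which is exactly what converts the equality $X_{\alpha(i)}^{\sigma^{-1}\rho_i} = X_i^{\rho_i}$ into $X_i^\sigma = X_{\alpha(i)}$.
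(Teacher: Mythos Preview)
Your proof is correct and follows essentially the same approach as the paper. The paper compresses the argument into a single biconditional---using (CL1) and (CL2) of $\can$ together with the hypothesis $X_i^{\rho_i}=X_j^{\rho_j}$ to show that $X_i^{\sigma}=X_j$ if and only if $(\Delta_i\rho_i)^\sigma=\Delta_j\rho_j$, from which $\aut(\CX)=\aut(\CX^\set)$ follows---whereas you unpack this into the covariance identity $(\CX^\phi)^\set=(\CX^\set)^\phi$, an explicit verification of (CL1), and then the two inclusions for the automorphism groups; the underlying reasoning is the same.
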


\begin{proof}
We have $X_i^{\rho_i}
=X_j^{\rho_j}$ for all
$i,j\in[t]$.
Then, by (CL1) and (CL2)
of the function $\can$, it holds for all $\sigma\in\sym(V)$
that $X_i^{\sigma}=X_j$, if and only if
$\can(X_i^{\sigma})=\can(X_j)$, which in turn is equivalent
to $(\Delta_i\rho_i)^\sigma=\Delta_j\rho_j$.
This implies
that $\aut(\CX)=\aut(\CX^\set)$.
\end{proof}

We extend \cref{exa:int} to (unordered) partitions by exploiting the object
replacement lemma.

\begin{exa}\label{exa:set}
Let $G=(V,E,P)$ be a graph
with an (unordered) partition $P=\{E_B,E_R\}$
of the edges $E=E_B\cupdot E_R$.
\begin{figure}[t]
\begin{center}
\begin{tikzpicture}[scale=0.7]
\foreach \i in {1,...,6}{
	\node[draw,circle,scale=0.5,color=black]
	(\i) at (360/6*\i:1cm)
	{};}
\path[draw,\myRed,ultra thick]
(1) -- (2);
\path[draw,\myRed,ultra thick]
(3) -- (4);
\path[draw,\myRed,ultra thick]
(5) -- (6);
\path[draw,\myBlue,ultra thick]
(2) -- (3);
\path[draw,\myBlue,ultra thick]
(4) -- (5);
\path[draw,\myBlue,ultra thick]
(6) -- (1);
\end{tikzpicture}
\caption{A graph $G=(V,E,P)$
with an (unordered) partition $P=\{
\bm{\textcolor{\myBlue}{E_B}},
\bm{\textcolor{\myRed}{E_R}}\}$.\label{fig:indistinguishable:colors}}
\end{center}
\end{figure}
In contrast to \cref{exa:int},
the partition is not ordered.
Here,
there exists
an automorphism that
maps the set of blue edges $E_B$ to the set of red edges $E_R$ (see Figure~\ref{fig:indistinguishable:colors}).
Let $\can,\canSet$ be canonical labeling functions
and let $\Delta_B\rho_B:=\can(E_B)$
and $\Delta_R\rho_R:=\can(E_R)$.
By \cref{lem:rep},
$\canGraph(G):=\canSet(\{\Delta_B\rho_B,\Delta_R\rho_R\})$
defines a canonical labeling for graphs with (unordered) edge partitions.
\end{exa}

To exploit the object replacement lemma, our upcoming algorithms in
\cref{sec:hyper,sec:sets}
use the following partitioning strategy.

\paragraph{General Strategy}
Assume we are given an object $X=\{x_1,\ldots,x_t\}$ that is to be canonized.
Our general strategy is to construct an (\emph{unordered}) partition $X=X_1\cupdot\ldots\cupdot
X_s$ in an isomorphism-invariant way. We call the parts~$X_i$ \emph{bundles}.
The isomorphism invariance of the partition into bundles in particular
ensures that the set $\{X_1,\ldots,X_s\}$ has the same
automorphism group as $X$.
In the cases where this leads to a trivial partition (i.e., $s=t$ or $s=1$),
we will use problem specific arguments to instantly make some form of progress.

The tough case will occur when the partition is non-trivial.
In this case, we will recursively compute
canonical labelings $\Theta_i\tau_i$ for the bundles $X_i$.
(Since $s>1$ and thus $X_i\subsetneq X$, we are guaranteed to make progress  on the
recursive instance).
This gives us canonical labelings for each bundle independently,
but we have not taken into account any interdependencies between the bundles. This will be our next step.

First, assume that the bundles $X_i$ are pairwise non-isomorphic.
Since each bundle is canonized separately,
we are now able to sort the bundles according to their canonical forms.
This gives an \emph{ordered} partition of the bundles
$\CX:=(X_1,\ldots,X_s)$
that is obtained
by renaming the indices such that
$X_i^{\tau_i}
\prec X_j^{\tau_j}$,
if and only if
$i<j\in[s]$.
Now, we can exploit the object replacement paradigm
and replace the bundles
by their labeling cosets without losing any information.
This leads to a tuple of
labeling cosets $\CX^\intt:=(\Theta_1\tau_1,\ldots,\Theta_s\tau_s)$ for which we
already have a canonization technique in \cref{sec:atoms}.

Second, assume the bundles $X_i$ are pairwise
isomorphic.
Here, we have to deal with the (unordered) partition
$\CX:=\{X_1,\ldots,X_s\}$.
We exploit the object replacement lemma (\cref{lem:rep}) again.
We replace each bundle with its labeling coset and
obtain a set of labeling cosets
$\CX^\set:=\{\Theta_1\tau_1,\ldots,\Theta_s\tau_s\}$.
Doing this, we do not lose any symmetries of the object $\CX$
and it is now sufficient to compute a canonical labeling
for the object $\CX^\set$ rather than $\CX$.
To canonize $\CX^\set$, we will use a recursive approach. (Since $s<t$ and thus
$|\CX^\set|<|X|$, we are guaranteed to make progress on the recursive instance).

In general, we would see a mixture of the cases
with some bundles being isomorphic to others but not to all. In this mixed case,
we order the bundles according to their isomorphism type and perform a mixture of the other two cases.

\section{Canonization of Hypergraphs}\label{sec:hyper}

As dictated by the object replacement lemma (\cref{lem:rep}), the key problem we need to
solve is to compute a canonical labeling function for objects
$\CX^\set$ where the object $\CX^\set=\{\Delta_1\rho_1,\ldots,\Delta_t\rho_t\}$ 
is a set of labeling cosets.
Towards a solution of this, our next building block  will be the following more specialized problem, reminiscent of hypergraph canonization.

\problem{\canSetHyper\label{prob:CL:SetHyper}}
{$(K,\Delta\rho)\in\obj(V)$ where
$K=\{(\Delta_1\rho_1,S_1),\ldots,(\Delta_t\rho_t,S_t)\}$,
$\Delta_i\rho_i\leq\lab(V)$, $\Delta\rho\leq\lab(V)$, $S_i\subseteq V$ for
all $i\in[t]$, $S_i\neq
S_j$ for $i\neq j$ and $V$ is an unordered set}
{(K,\Delta\rho)}
{(K^\phi,\phi^{-1}\Delta\rho)}
{\{\delta\in\Delta\mid\exists\psi\in\sym(t)
\forall
i\in[t]:(\Delta_i\rho_i,S_i)^\delta
=(\Delta_{\psi(i)}\rho_{\psi(i)},S_{\psi(i)})\}}

As usual, Conditions (CL1) and (CL2) given here coincide with the general
Condition (CL1) and (CL2). In particular,
$\canSetHyper(K,\Delta\rho)=\aut((K,\Delta\rho))\pi$ for some $\pi$.

\begin{lem}\label{lem:canSetHyper}
A function $\canSetHyper$
solving \cref{prob:CL:SetHyper}
can be computed in time
$2^{\CO(k)}n^{\CO(1)}$
where~$n$ is the input size and
$k$ is the size of the largest $\Delta$-orbit of $V$.
\end{lem}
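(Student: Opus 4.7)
I follow the blueprint used for $\canMatch$ and Luks's hypergraph recursion: a Luks-style recursion on a $\Delta$-invariant subset $A\subseteq V$ containing $\bigcup_i S_i$ (initialized to $A=V$), with the size of the largest $\Delta$-orbit in $A$ governing the exponential factor. The base cases are $K=\emptyset$ (return $\Delta\rho$) and $|A|\leq 1$, in which the $S_i$ are distinct subsets of a single vertex so $|K|\leq 2$, and the instance can be finished by (iterated) calls to $\canInt$ on the at-most-two attached labeling cosets. In the intransitive case, pick a minimal $\Delta$-orbit $A_1\subsetneq A$ (minimal w.r.t.~$\prec$ on $A_1^\rho$), bundle the pairs of $K$ by the isomorphism-invariant data $|S_i\cap A_1|$ (or, after canonizing restrictions, finer invariants) as dictated by the general bundling strategy, and reduce to two subinstances on the pieces $A_1$ and $A\setminus A_1$, processed sequentially so that the refined coset from the $A_1$-call is fed into the second call.

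\textbf{Transitive case.} When $\Delta$ is transitive on $A$, I perform Luks's bisection exactly as in the transitive branch of $\canMatch$: set $A^\ord:=A^\rho$ and $\Delta^\ord:=(\Delta\rho)^\rho$, partition $A^\ord=A^\ord_1\cupdot A^\ord_2$ with $|A^\ord_1|=\lfloor|A^\ord|/2\rfloor$ and $A^\ord_1$ minimal w.r.t.~$\prec$, compute $\Psi^\ord:=\stab_{\Delta^\ord}(A^\ord_1,A^\ord_2)$ by a membership test (in time polynomial in its index $2^{\CO(|A|)}$), decompose $\Delta^\ord$ into left cosets $\delta_i^\ord\Psi^\ord$, and recurse on each coset. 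Every recursive call operates on the same $K$ but with a group whose orbits on $A$ have size at most $|A|/2$. Each recursion returns a labeling coset $\Theta_i\tau_i$; I then sort the induced ordered images $(K,\Delta\rho)^{\tau_i}$ using \cref{lem:prec}, identify the minimal equivalence class, and return the coset closure $\langle\cdot\rangle$ of its representatives.

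\textbf{Correctness, running time, and main obstacle.} Properties (CL1) and (CL2) follow inductively along the same lines as in the proofs of $\canPoint$, $\canMatch$, and $\canInt$: every branching decision uses only isomorphism-invariant data (minimal orbit, minimal bisection, sort by $\prec$), and the coset closure in the transitive case captures exactly the labelings that attain the canonical ordered image. The running-time recurrence $R(k)\leq 2^{\CO(k)}R(k/2)+\mathrm{poly}(n)$ solves to $2^{\CO(k)}n^{\CO(1)}$ just as in $\canMatch$. The main obstacle is correctly carrying the paired structure of $K$ through the recursion: unlike pure hypergraph canonization, where recursive outputs are compared only through the images of the subsets $S_i$, the sort step here must compare the full pairs $(\Delta_i\rho_i,S_i)^{\tau}$ including the images of the attached labeling cosets, and the intransitive bundling must preserve the bijection between subsets and their attached cosets. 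Both are handled by representing the pairs as objects and applying \cref{lem:prec} to compare ordered images in polynomial time, together with the object-replacement lemma (\cref{lem:rep}) which lets us substitute subinstances by their canonical labeling cosets without losing or introducing symmetries.
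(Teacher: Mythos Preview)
Your transitive branch and your base case match the paper's approach, but the intransitive case --- which is the heart of this algorithm --- has a real gap.

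You propose to ``reduce to two subinstances on the pieces $A_1$ and $A\setminus A_1$, processed sequentially so that the refined coset from the $A_1$-call is fed into the second call,'' mirroring the intransitive branch of $\canMatch$. That does not work here. Your base case relies on the invariant that the $S_i\cap A$ are pairwise distinct (this is what forces $|K|\leq 2$ when $|A|\leq 1$), but restricting $A$ to $A_1$ can collapse several $S_i$ to the same intersection, destroying the invariant. In $\canMatch$ this is harmless because each edge of the matching touches at most one vertex of $A$; here a hyperedge $S_i$ spans $A_1$ and $A_2$ simultaneously, so the two halves cannot be processed independently as two calls on the whole of $K$. Your bundling key $|S_i\cap A_1|$ is also too coarse to rescue this; you need to bundle by the actual set $S_i\cap A_1$ (this is isomorphism-invariant because $A_1$ is chosen canonically).

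The paper's missing idea is a three-way case split on the number $s$ of bundles obtained this way. If $s=t$ (all $S_i\cap A_1$ distinct) recurse on $(K,A_1)$; if $s=1$ (all $S_i\cap A_1$ equal) recurse on $(K,A_2)$; in either extreme the invariant is preserved by construction. When $1<s<t$, neither direct restriction works; instead one recursively canonizes each bundle $K_\ell$ on $A_2$ (where the members of $K_\ell$ \emph{do} differ), then uses object replacement to turn the set of bundles into a \emph{new, smaller} instance of the same problem on $A_1$, with the common intersections $R_\ell:=S_j\cap A_1$ playing the role of hyperedges (these are pairwise distinct by definition of the bundling). Your sketch gestures at bundling and at \cref{lem:rep}, but never states this trichotomy, and without it the recursion is neither well-defined nor terminating with the claimed base case. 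The running-time analysis in the $1<s<t$ branch is also nontrivial --- one must show $s^3+\sum_\ell|K_\ell|^3<|K|^3$ so that the total number of recursive calls stays within $2^{\CO(|A^*|)}|A|\,|K|^{\CO(1)}$ --- and your recurrence $R(k)\leq 2^{\CO(k)}R(k/2)$ only accounts for the transitive branch.
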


In an instance of Problem~\ref{prob:CL:SetHyper}, each labeling coset~$\Delta_i\rho_i$ comes with a subset $S_i$ of~$V$ and these subsets are pairwise distinct.
In the special case where $\Delta_i\rho_i=\lab(V)$ for all $i\in[t]$, the problem is
equivalent to hypergraph canonization.
However, we need the more general version for a recursive approach.
Nevertheless, we think of the sets~$S_i$ as hyperedges. Before giving a detailed
proof we describe the general strategy.

\paragraph{Intuition for the Hypergraph Algorithm}

To solve Problem~\ref{prob:CL:SetHyper}, we will maintain at any point in time a~$\Delta$-invariant
set $A\subseteq V$ for which the following condition (Condition (A)) holds:
$S_i\cap A\neq
S_j\cap A$ for all $i\neq j$. 
This set $A$ measures our progress in the sense that the
number of hyperedges
$t$ is bounded by $2^{|A|}$.
In the base case, if~$|A|$ is smaller than an absolute constant, we have a constant
number of hyperedges and can apply a brute force algorithm.

If~$\Delta$ acts transitively on~$A$ (the transitive case),
we branch on all possible splits of $A$ into two equally sized sets $A_1,A_2$,
shrinking~$\Delta$ in the process, and achieving that~$\Delta$ is not transitive anymore.

The trickiest case is when~$\Delta$ does not act transitively on~$A$. This gives us a canonical~$\Delta$-invariant partition 
$A=A_1\cupdot A_2$ (the intransitive case). Here, we need to apply recursion.
A primitive approach by recursing on $A_1$ and subsequently on $A_2$ does not work,
since it would not maintain the Condition~(A) which ensures
that we can handle the base case.
Instead, we define a partition into bundles $K=K_1\cupdot\ldots\cupdot K_s$
of the set $K$ as proposed by our general strategy.
In the concrete situation for hypergraphs, we can define bundles as follows.
We say that two hyperedges $S_i$ and $S_j$
are in the same bundle, if and only if
the hyperedges agree in the first part~$A_1$ (i.e.,
$S_i\cap A_1=S_j\cap A_1$)
(see \cref{fig:hyper}).
This gives an (unordered) partition of the set of
hyperedges $K=K_1\cupdot\ldots\cupdot K_s$, where each subset of hyperedges $K_i$ corresponds to a bundle.
In the extreme case where each hyperedge forms its own bundle, and hence $s=t$,
we can restrict our focus $A$ to the set $A_1$ and we still maintain Condition
(A), making progress. In the other extreme, if all hyperedges correspond to the
same bundle, and hence $s=1$, we can restrict our set $A$ to the $A_2$.
It remains to explain the case in which we have a proper bundling.
For this observe that the bundles themselves form instances of our original problem so
we can compute a canonical labeling
$\Theta_i\tau_i:=\canSetHyper(K_i)$ for each of the bundles recursively.
Following our general strategy, we next compute a canonical labeling for
the set of bundles taking
the relation between the bundles into account.

For the case that the bundles $K_\ell$ are pairwise non-isomorphic,
we directly follow our general strategy.
We order the bundles according to their isomorphism type and canonize them
sequentially using iterated instances from \cref{sec:atoms}.

Assume now that the bundles $K_i$ are pairwise
isomorphic.
In this case, we have an (\emph{unordered}) partition $\CK:=\{K_1,\ldots,K_s\}$
consisting of the pairwise isomorphic bundles.
Here, we exploit the object replacement
paradigm (\cref{lem:rep}).
We replace each bundle with its labeling coset and
obtain a set of labeling cosets
$\CK^\set:=\{\Theta_1\tau_1,\ldots,\Theta_s\tau_s\}$.
Doing this, we do not lose any symmetries of the hypergraph $\CK$
and it is now sufficient to compute a canonical labeling
of the object $\CK^\set$ rather than $\CK$.
To canonize $\CK^\set$ using recursion, we have to interpret this set
of labeling cosets as an
instance of our hypergraph problem.
For this, observe that restricted to the set $A_1\subseteq A\subseteq V$
the set $\CK^\set$ must induce a hypergraph structure since the bundles 
pairwise disagree on~$A_1$
(see the left side of \cref{fig:hyper}).
This allows us to compute the canonical labeling for $\CK^\set$ recursively
using our hypergraph algorithm.

In the general case, in which some bundles being isomorphic to others but not to
all, we perform a mixture of the two cases just described.
\begin{figure}[t]
\begin{center}
\begin{tikzpicture}[scale=0.7,use Hobby shortcut]

\draw[xstep=6.0,ystep=8] (0,0) grid (12,8);
\draw (0.5,7.5) node {$A_1$};
\draw (6.5,7.5) node {$A_2$};


\newcommand{\halfedgeA}{
\begin{scope}
\clip (0,-0.1) rectangle (6,3.1);
\filldraw[closed] (6.5,3) .. (6,3) .. (5.5,3) .. (4,2) .. (3,2) .. (1,1) ..
(2,0) ..
(3,0.5) ..
(5.5,1) .. (6,1) ..
(6.5,1) .. (8,2);
\end{scope}}

\newcommand{\halfedgeB}{
\begin{scope}
\clip (0,0.3) rectangle (6,3.1);
\filldraw[closed] (6.5,3) .. (6,3) .. (5.5,3) .. (4.5,2) .. (3,2) .. (0.5,1.5)
..
(1.5,0.5) ..
(3,0.5) ..
(5.5,1) .. (6,1) ..
(6.5,1) .. (8,2);
\end{scope}}

\newcommand{\mirror}[1]{
\begin{scope}[yscale=-1,yshift=-4cm]
#1
\end{scope}}

\newcommand{\rightSide}[1]{
\begin{scope}[xscale=-1,xshift=-12cm]
#1
\end{scope}}

\begin{scope}[color=black,fill=\myGreen,fill opacity=0.3,yshift=4.1cm]
\halfedgeA
\halfedgeA
\rightSide{\mirror{\halfedgeB}}
\rightSide{\halfedgeB}
\end{scope}

\begin{scope}[color=black,fill=\myBlue,fill opacity=0.3,yshift=2.2cm]
\halfedgeB
\halfedgeB
\rightSide{\mirror{\halfedgeB}}
\rightSide{\halfedgeA}
\end{scope}

\begin{scope}[color=black,fill=\myRed,fill opacity=0.3,yshift=-0.2cm]
\mirror{\halfedgeA}
\mirror{\halfedgeA}
\rightSide{\mirror{\halfedgeA}}
\rightSide{\halfedgeB}
\end{scope}

\end{tikzpicture}
\caption{Bundling the hyperedges: six hyperedges are bundled into 3 bundles (shown in distinct colors) each consisting of two hyperedges.\label{fig:hyper}}
\end{center}
\end{figure}

\paragraph{Comparison to Previous Algorithms}
The first $2^{\CO(|V|)}$-time hypergraph-isomorphism algorithm is due to Luks
\cite{DBLP:conf/stoc/Luks99}.
In a dynamic programming fashion,
he computed the isomorphisms by intersecting
cosets of isomorphisms between already computed subhypergraphs.
This approach makes use of the coset-intersection problem
for permutation groups. 

Using Luks's approach as a starting point, in \cite{DBLP:journals/algorithmica/ArvindDKT15} an algorithm is developed for the setting of bounded color classes.
Indeed, the authors found a way to
exploit ordered partitions by using the bundling technique described here.
However, within the color classes they used essentially Luks's dynamic programming algorithm
to compute the isomorphisms.

The crucial novelty in our algorithm is the more general definition
of the problem in which labeling cosets occur as part of the input structure.
This allows us to completely give up dynamic programming,
and instead use the bundling technique in a recursive way.
Treating cosets (which are the results of recursive canonization calls) as objects themselves allows us to recurse on substructures.

This in turn allows us to compute
canonical labelings for hypergraphs resolving the open question of both papers.

Another advantage of our approach is that it allows for a polynomial-time
algorithm if the group $\Delta$ is of a certain restricted structure (such as having bounded size composition factors, see \cref{cor:gammaD}).

\paragraph{Detailed Description of the Hypergraph Algorithm}
We proceed to prove Lemma~\ref{lem:canSetHyper} by giving a detailed description and analysis of our hypergraph  canonization algorithm.

\begin{proof}[Proof of Lemma~\ref{lem:canSetHyper}]
For the purpose of recursion, we need an additional input parameter.
Specifically, we use a subset $A\subseteq V$ such that
$\Delta\leq\stab(A)$. 
Initially, we set $A=V$.
For this parameter, we always require that
\begin{enumerate}
  \item[\textnormal{(A)}] $S_i\cap A\neq
S_j\cap A$ for all $i\neq j$.
\end{enumerate}

\algorithmDAN{\canSetHyper(K,A,\Delta\rho)}
\begin{cs}

\case{$|A|\leq 1$}
\emph{(Because of Property (A), it holds that $|K|\leq 2$.)}
\begin{cs}
\case{$|K|=1$}
Return $\canInt(\Delta_1\rho_1,\lab(S_1)\uparrow^V;\Delta\rho)$.\\
\iterated
\case{$|K|=2$}~\\
Rename the indices of elements in $K$ such that
$S_1\cap A=\emptyset$
and $S_2\cap A=A$.\\
Return
$\canInt(\Delta_1\rho_1,\lab(S_1)\uparrow^V,\Delta_2\rho_2,\lab(S_2)\uparrow^V;\Delta\rho)$.\\
\iterated
\end{cs}

\case{$\Delta$ is intransitive on $A$}~\\
Partition $A=A_1\cupdot A_2$ where
$A_1$ is a $\Delta$-orbit such that
$A_1^\rho$ is minimal.\\
\minimal\\
Define $S_{i,1}:=S_i\cap A_1$ for each $i\in[t]$.\\
Define an (unordered) partition $K:=K_1\cupdot\ldots\cupdot K_s$ with
$\CK:=\{K_1,\ldots,K_s\}$ such that:\\
$S_{i,1}=S_{j,1}$, if and only if
$(\Delta_i\rho_i,S_i),(\Delta_j\rho_j,S_j)\in K_\ell$ for
some $\ell\in[s]$.
\begin{cs}
\case{$s=t$}
\emph{(Thus $S_{i,1}\neq S_{j,1}$ for all $i\neq j$.)}\\
Recurse and return
$\Lambda:=\canSetHyper(K,A_1,\Delta\rho)$.

\case{$s=1$}
\emph{(Thus $S_{i,1}= S_{j,1}$ for all $i,j\in[t]$
and thus $S_i\cap A_2\neq S_j\cap A_2$ for all $i\neq j$.)}\\
Recurse and return
$\Lambda:=\canSetHyper(K,A_2,\Delta\rho)$.

\case{$1<s<t$}
\emph{(Thus $S_{i,1}= S_{j,1}$ for some $i\neq j$, but not
for all $i\neq j$.)}\\
\emph{(We compute a canonical labeling for $\CK$.)}\\
Compute $\Theta_i\tau_i:=\canSetHyper(K_i,A_2,\Delta\rho)$ for
each $i\in[s]$ recursively.\\
For each $K_i$ pick a $(\Delta_j\rho_j,S_j)\in K_i$ and set $R_i:=S_{j,1}$.\\
\emph{(Here~$R_i$ does not depend on the choice of $j$.)}\\
Let $\CK^\setHyper:=\{(\Theta_1\tau_1,R_1),\ldots,(\Theta_s\tau_s,R_s)\}$.\\
Define an \emph{ordered} partition
$\CK^\setHyper=\CK^\setHyper_1\cupdot\ldots\cupdot\CK^\setHyper_r$ 
such that\\
$(K_i,\Delta\rho)^{\tau_i}
\prec (K_j,\Delta\rho)^{\tau_j}$, if and only if
$(\Theta_i\tau_i,R_i)\in\CK^\setHyper_p$
and $(\Theta_j\tau_j,R_j)\in\CK^\setHyper_q$ for some~$p,q\in[r]$ with~$ p<q$.\\
Recurse and return
$\Lambda:=\canSetHyper((\CK^\setHyper_1,A_1),\ldots,(\CK^\setHyper_r,A_1);\Delta\rho)$.\\
\iterated
\end{cs}
\case{$\Delta$ is transitive on $A$}~\\
\transitive{\canSetHyper}{K}
\end{cs}

\begin{A}
Towards showing correctness of the algorithm, we first argue that Condition (A) remains satisfied in recursive calls.
In the intransitive case, Condition (A)
remains satisfied for
the recursive calls by construction of the partition of $K$.
In the transitive case, $K$ and $A$ remain unchanged,
and therefore Condition (A) also remains satisfied.
\end{A}

\begin{cl1}
As usual, Property (CL1) follows since 
all ordered sequences and
all partitions are defined
in an isomorphism-invariant way.
\end{cl1}

\begin{cl2}
Consider the Case $|A|\leq 1$.
If $|K|=1$, then $\aut((K,\Delta\rho))$
is equal to the intersection $\Delta_1\cap\stab(S_1)\cap\Delta$.
Analogously, if $|K|=2$, then $\aut((K,\Delta\rho))$
is equal to $\Delta_1\cap\stab(S_1)\cap\Delta_2\cap\stab(S_2)\cap\Delta$.

We consider the intransitive case.
In the Cases $s=t$ and $s=1$, (CL2) holds by induction.
We consider the Case $1<s<t$.
By induction, it holds that
$\Lambda$ defines
a canonical labeling for $(\CK^\setHyper_1,\ldots,\CK^\setHyper_r,\Delta\rho)$.
Since $R_i$ was chosen in an isomorphism-invariant way,
$\Lambda$ defines a canonical labeling for
$(\CJ^\setHyper_1,\ldots,\CJ^\setHyper_r,\Delta\rho)$
where $\CJ^\setHyper_i=\{\Lambda\mid (\Lambda,R)\in \CR^\setHyper_i\}$
for $i\in[r]$.
Because of the object replacement lemma (\cref{lem:rep}),
it holds that $\Lambda$ defines
a canonical labeling for $(\CJ_1,\ldots,\CJ_r,\Delta\rho)$
where $\CJ=\CJ_1\cupdot\ldots\cupdot\CJ_r$
such that
$K_i\in \CJ_\ell$, if and only if $\Theta_i\tau_i\in\CJ^\setHyper_\ell$.
Since $(\CJ_1,\ldots,\CJ_r)$ is an isomorphism-invariant \emph{ordered}
partition of $\CJ=\CK$, it holds that
$\Lambda$ defines a canonical labeling for $(\CK,\Delta\rho)$.
Again, $\CK=\{K_1,\ldots,K_s\}$ is an isomorphism-invariant
(unordered) partition
of $K=K_1\cupdot\ldots\cupdot K_s$
and therefore $\Lambda$ defines a canonical labeling for $(K,\Delta\rho)$.

The transitive case is similar to the analysis in \cref{lem:canMatch}.
\end{cl2}

\begin{runtime}
Let $A^*\subseteq A$ be a $\Delta$-orbit of maximal size.
We claim that the maximum number of recursive calls given these parameters $R(|A^*|,|A|,|K|)$ is at most~$T:=2^{6|A^*|}|A||K|^3$.

In the Case $|A|\leq 1$, we do not have further recursive calls.
The transitive case is similar to the analysis in \cref{lem:canMatch}.
In the intransitive cases if $s=t$ or $s=1$, there is only one recursive call and the size of $A$ decreases.

We consider the intransitive case when $1<s<t$.
Here, we have recursive calls on $K_1,\ldots,K_s$
and $\CK^\setHyper_1,\ldots\CK^\setHyper_r$.
The cardinalities of $|A^*|$ and $|A|$ never increase for recursive calls.
\begin{align*}
R(|A^*|,|A|,|K|)&\leq 1+
\sum_{i=1}^{r}R(|A_i^*|,|A_1|,|\CK^\setHyper_i|)+\sum_{i=1}^{s}
R(|A^*|,|A_2|,|K_i|)\\
&\overset{\mathclap{\text{induction}}}{\leq}
1+
\sum_{i=1}^{r}T(|A_i^*|,|A_1|,|\CK^\setHyper_i|)+\sum_{i=1}^{s}
T(|A^*|,|A_2|,|K_i|)\\
&\leq 1+
T(|A^*|,|A|,\sum_{i=1}^{r}|\CK^\setHyper_i|)+\sum_{i=1}^{s}
T(|A^*|,|A|,|K_i|)\\
&= 1+ 2^{6|A^*|}|A|( s^3 +\sum_{i=1}^{s} |K_i|^3)
\leq T,
\end{align*}
where for the last inequality we argue as follows: we use
that $s\neq t$ and thus
there is an $i\in[s]$ such that
$|K_i|\geq 2$.
Without loss of generality, we assume $|K_1|\geq 2$.
For $i\neq 1$, we use $|K_i|\geq 1$.
\begin{align*}
|K|^3=(\sum_{i=1}^{s} |K_i|)^3
&=
\sum_{i=1}^{s} |K_i|^3
+
\sum_{\substack{2\leq i,j,\ell \leq s\\|\{i,j,\ell\}|\geq 2}} |K_i||K_j||K_\ell|
+
\sum_{2\leq i,j\leq s} 3|K_1||K_i||K_j|
+
\sum_{2\leq i\leq s} 3|K_1|^2|K_i|\\
&\geq
\sum_{i=1}^{s} |K_i|^3+
((s-1)^3-(s-1))+6(s-1)^2+12(s-1)\\
&=\sum_{i=1}^{s} |K_i|^3+s^3+3s^2+2s-6
\overset{s>1}{>} \sum_{i=1}^{s} |K_i|^3+s^3
\end{align*}
We consider the running time for one single call without recursive costs.
Such a call can be computed
in time ${2^{\CO(k)}|V|^{\CO(1)}}$
where $k$ is the largest $\Delta$-orbit of $V$.
In total, we have a running time of at most
$T\cdot 2^{\CO(k)}|V|^{\CO(1)}
\subseteq 2^{\CO(k)}n^{\CO(1)}$
where
$n$ is the input size and
$k$ is the size of the largest $\Delta$-orbit of $V$.
\end{runtime}
\end{proof}

\begin{cor}\label{cor:hyperK}
Canonical labelings for hypergraphs
can be computed in time $2^{\CO(k)}n^{\CO(1)}$
where $n$ is the input size and
$k$ is the size of the largest color class of $V$.
\end{cor}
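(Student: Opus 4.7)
The plan is to reduce the canonization of (vertex-colored) hypergraphs directly to \cref{prob:CL:SetHyper} and invoke \cref{lem:canSetHyper}. Given a hypergraph $X=(V,H)$ with vertex color classes $V=V_1\cupdot\ldots\cupdot V_c$ (the uncolored case is captured by $c=1$), I would set $\Delta:=\stab_{\sym(V)}(V_1,\ldots,V_c)$, pick an arbitrary labeling $\rho\in\lab(V)$, form the set $K:=\{(\lab(V),S)\mid S\in H\}$, and define $\canGraph(X):=\canSetHyper(K,\Delta\rho)$. The hyperedges of a hypergraph are pairwise distinct, so the side condition $S_i\neq S_j$ of \cref{prob:CL:SetHyper} is satisfied.

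For (CL1), the labeling coset $\Delta\rho$ depends only on the ordered sequence of color classes (not on the arbitrary representative $\rho$), so the map from the colored ground set to $\Delta\rho$ is isomorphism-invariant. Thus for any color-preserving $\phi\in\iso(V;V')$, the reduction transports $(K,\Delta\rho)$ to $(K^\phi,\phi^{-1}\Delta\rho)$, and (CL1) of $\canSetHyper$ yields $\canGraph(X^\phi)=\phi^{-1}\canGraph(X)$.

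For (CL2), the group in the output specification of \cref{prob:CL:SetHyper} simplifies here since $\Delta_i\rho_i=\lab(V)$ gives $(\lab(V))^\delta=\delta^{-1}\lab(V)=\lab(V)$ automatically, so the condition on the first components of the pairs is vacuous. What remains is the group of all $\delta\in\Delta$ that permute the hyperedges among themselves, which is exactly the color-preserving automorphism group $\aut(X)$ of the colored hypergraph. Hence the output of $\canSetHyper(K,\Delta\rho)$ has the form $\aut(X)\pi$ for some $\pi$, matching (CL2).

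For the running time, the input size of the $\canSetHyper$-instance is polynomial in $|V|+|H|$ and therefore polynomial in~$n$, while the $\Delta$-orbits on $V$ are precisely the color classes $V_1,\ldots,V_c$. Thus the size of the largest $\Delta$-orbit equals $k$, and \cref{lem:canSetHyper} yields $2^{\CO(k)}n^{\CO(1)}$. I do not foresee a significant obstacle: the reduction is essentially immediate, and the only subtle point is to confirm that the trivial choice $\Delta_i\rho_i=\lab(V)$ together with the color-respecting $\Delta$ faithfully encode the colored hypergraph automorphism group in (CL2), which is straightforward from the definition of the action on labeling cosets.
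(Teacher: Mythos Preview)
Your reduction is essentially the paper's: form $K=\{(\lab(V),S)\mid S\in H\}$, encode the color classes in $\Delta\rho$, and invoke \cref{lem:canSetHyper}. The one slip is the phrase ``pick an arbitrary labeling $\rho\in\lab(V)$'': distinct choices of $\rho$ yield distinct cosets $\Delta\rho$, so your claim that $\Delta\rho$ depends only on the color classes is false as written and (CL1) would not follow---the paper fixes this by taking the specific coset $\Delta\rho=\{\lambda\in\lab(V)\mid \lambda(v_i)<\lambda(v_j)\text{ whenever }v_i\in C_i,\ v_j\in C_j,\ i<j\}$, which is the isomorphism-invariant choice you presumably have in mind.
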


\begin{proof}
Given a
hypergraph with hyperedges $\{S_1,\ldots,S_t\}$
and a coloring $(C_1,\ldots,C_t)$
of $V=C_1\cupdot\ldots\cupdot C_t$,
we use the previous algorithm to compute
a canonical labeling for $(K,\Delta\rho)$
where
$K=\{(\lab(V),S_1),\ldots,(\lab(V),S_t)\}$
and $\Delta\rho=\{\lambda\in\lab(V)\mid\forall i,j\in[t],i<j \forall v_i\in
C_i,v_j\in C_j:
\lambda(v_i)<\lambda(v_j)\}$.
\end{proof}
\section{Canonization of Sets}\label{sec:sets}

In this section, we are looking
for a canonical labeling function
for objects $\CX$
in the case where
$\CX=(\{\Delta_1\rho_1,\ldots,\Delta_t\rho_t\},\Delta\rho)$.
As we will see in \Cref{sec:obj}, canonical labelings
for objects in general can be reduced to this case
in polynomial time.

\problem{\canSet\label{prob:CL:Set}}
{$(J,\Delta\rho)\in\obj(V)$ where $J=\{\Delta_1\rho_1,\ldots,\Delta_t\rho_t\}$,
$\Delta_i\rho_i\leq\lab(V)$ for all $i\in[t]$,
$\Delta\rho\leq\lab(V)$ and $V$ is an unordered set}
{(J,\Delta\rho)}
{(J^\phi,\phi^{-1}\Delta\rho)}
{\{\delta\in\Delta\mid\exists\psi\in\sym(t)
\forall
i\in[t]:\delta^{-1}\Delta_i\rho_i
=\Delta_{\psi(i)}\rho_{\psi(i)}\}}

As usual, Conditions (CL1) and (CL2) coincide with the general
the conditions. In particular,
$\canSet(J,\Delta\rho)=\aut((J,\Delta\rho))\pi$ for some $\pi$.

\begin{lem}\label{lem:canSet}
A function $\canSet$
solving \cref{prob:CL:Set} can be computed in time
$2^{\CO(k)}n^{\CO(1)}$
where~$n$ is the input size and
$k$ is the size of the largest $\Delta$-orbit of $V$.
\end{lem}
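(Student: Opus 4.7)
The plan is to adapt the recursive scheme of \cref{lem:canSetHyper} to the present setting, where the elements of~$J$ are bare labeling cosets rather than coset-hyperedge pairs. As there, I would introduce an auxiliary $\Delta$-invariant parameter $A \subseteq V$ (initially $A = V$) that tracks the progress of the recursion, and branch on three cases for the action of~$\Delta$ on $A$: a base case (handled by iterated \canInt), a transitive case, and an intransitive case. The transitive case is handled verbatim as in \cref{lem:canMatch,lem:canSetHyper}: partition $A^\ord$ into halves of sizes $\lfloor|A^\ord|/2\rfloor$ and $\lceil|A^\ord|/2\rceil$ with $A_1^\ord$ minimal, compute the setwise stabilizer $\Psi^\ord$, decompose $\Delta^\ord$ into left cosets of $\Psi^\ord$ and recurse on each; then take the smallest-coset join of those recursive outputs whose canonical images coincide.

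The crux is the intransitive case. Canonically partition $A = A_1 \cupdot A_2$ with $A_1$ a $\Delta$-orbit of minimal image $A_1^\rho$. Following the general bundling strategy of \cref{sec:obj:replacement}, I would partition $J$ into bundles according to an isomorphism-invariant signature of each coset $\Delta_i\rho_i$ relative to $A_1$. A natural choice of signature is the canonical image $(\Delta_i\rho_i, \lab(A_1)\uparrow^V)^{\sigma_i}$, where $\sigma_i$ is any element of $\canInt(\Delta_i\rho_i, \lab(A_1)\uparrow^V;\Delta\rho)$; two cosets are placed in the same bundle iff they yield the same canonical image under this computation.

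If the bundling is trivial in that every bundle is a singleton, then the signatures themselves form pairwise distinct subsets of $A_1$ and I can directly invoke \canSetHyper, pairing each $\Delta_i\rho_i$ with its signature hyperedge. If, conversely, there is only a single bundle, the signature is uninformative over $A_1$ and I recurse with $A$ shrunk to $A_2$, making progress on $|A|$. Otherwise we have a proper bundling $J = J_1 \cupdot \ldots \cupdot J_s$ with $1<s<|J|$; I would recursively canonize each bundle (a strictly smaller instance of \canSet) to obtain $\Theta_\ell\tau_\ell := \canSet(J_\ell, \Delta\rho)$, and then, justified by the object replacement lemma (\cref{lem:rep}), replace each bundle by its labeling coset and invoke \canSetHyper on the set $\{(\Theta_\ell\tau_\ell, R_\ell)\}_\ell$, where $R_\ell \subseteq A_1$ is the common signature-hyperedge of bundle $\ell$. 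Correctness of (CL1) and (CL2) then follows from the isomorphism invariance of the bundling, \cref{lem:rep}, and the corresponding properties of \canInt and \canSetHyper.

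The main obstacle is designing the signature so that it is simultaneously (a) isomorphism-invariant with respect to $\iso(V;V')$, (b) strong enough that a non-trivial bundling corresponds to a strict refinement of $J$, and (c) compatible with the action on cosets so that the signatures glue correctly when combined via \canSetHyper. The running-time analysis mirrors \cref{lem:canSetHyper}: the transitive case contributes a factor $2^{\CO(|A^*|)}$ per level, while the number of recursive calls is bounded polynomially in $|V|$ and $|J|$ by the usual cubic-in-$|J|$ amortization against the $1<s<|J|$ branching; together with polynomial overhead per call from \canInt and \canSetHyper, this yields the claimed bound $2^{\CO(k)} n^{\CO(1)}$.
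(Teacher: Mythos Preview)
Your scheme has a genuine gap: you carry only a single auxiliary set~$A$ and branch on the action of~$\Delta$ on~$A$, but you never maintain an invariant that ties the size (or distinguishability) of~$J$ to~$A$. In the hypergraph lemma this was Condition~(A), which guarantees~$|K|\le 2^{|A|}$ and hence a bounded base case. Here you have no analogue. Concretely, if all the~$\Delta_i\rho_i$ lie in one~$\Delta$-orbit (so every signature you compute coincides, i.e.\ $s=1$ at every intransitive step), your recursion repeatedly shrinks~$A$ while leaving~$J$ and~$\Delta$ untouched, and you arrive at~$|A|\le 1$ with~$|J|=t$ still arbitrary and the elements of~$J$ still pairwise~$\Delta$-conjugate. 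At that point ``iterated~$\canInt$'' is inapplicable because there is no canonical ordering of~$J$ to feed it. A second, smaller issue: in your $s=t$ case the signatures~$(\Delta_i\rho_i,\lab(A_1)\uparrow^V)^{\sigma_i}$ are ordered objects, not subsets of~$A_1$, so the promised hand-off to~$\canSetHyper$ with ``signature hyperedges'' is not justified as stated.

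The paper fixes exactly this by introducing a \emph{second} coset~$\Theta_i\tau_i$ attached to each element (initially~$\Theta_i\tau_i=\Delta_i\rho_i$) together with conditions~(C) and~(AC) ensuring the~$\Theta_i\tau_i|_{A\cup C}$ are pairwise distinct. The crucial departure from your plan is the transitive case: instead of splitting~$\Delta$, the paper splits each~$\Theta_i\tau_i$ into subcosets of a common stabilizer~$\Psi^\ord$ (so~$|L|$ grows by a factor~$\le 2^{|A|}$ while the~$\Theta_i$-orbits halve). This is what guarantees that the distinguishing information carried by the~$\Theta_i\tau_i$'s becomes finer as~$|A|$ shrinks, so that at~$|A|\le 1$ a strict canonical ordering of~$L$ is available. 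Your proposal treats the transitive case ``verbatim as in \cref{lem:canMatch,lem:canSetHyper}'' by decomposing~$\Delta$; that makes~$\Delta$ intransitive but does nothing to separate the elements of~$J$, which is the whole difficulty here.
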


\paragraph{Intuition for the Sets of Labeling Cosets Algorithm}
As it was the case for hypergraph algorithm, we need a
generalization of our problem for a recursive approach.
We extend the instances to sets of the form
$L=\{(\Delta_1\rho_1,\Theta_1\tau_1),\ldots,(\Delta_t\rho_t,\Theta_t\tau_t)\}$.
For the initial instance, $\Theta_i\tau_i$ is simply set to equal $\Delta_i\rho_i$.
The labeling cosets $\Theta_i\tau_i$ can be seen as an analogue to the
hyperedges $S_i$ in Problem~\ref{prob:CL:SetHyper} and are used to define a bundling of the instance $L$.
Compared to hyperedges $S_i$ the labeling cosets $\Theta_i\tau_i$ can describe 
global interdependencies.
While Condition (A) for hypergraphs describes a local distinctness,
for our new problem, we define adequate Conditions (C) and
(AC) describing a more refined distinctness of the cosets.
But again, the size of a set $A\subseteq V$ is used to measure our progress in
the sense that the size $|L|$ is bounded in terms of $|A|$.
Additionally, we will ensure a uniformity condition (Condition (D)) which
helps us in various situations.
In the algorithm, following our general strategy, we heavily use the partitioning techniques.

First, we consider the case in which at least one group $\Theta_i$ is
intransitive on the set $A$.
The uniformity (D) implies that all groups $\Theta_i$ must be
intransitive. In particular, each $\Theta_i$ can be associated with the
orbit $A_{i,1}\subseteq A$ that has minimal image under $\tau_i$.

If all orbits $A_{i,1}$ and $A_{j,1}$ are distinct for $i\neq j$, the sets
$A_{i,1}\subseteq A$ form a hypergraph and we apply our hypergraph algorithm.
If the orbits $A_{i,1}$ and $A_{j,1}$ are equal for all $i,j\in[t]$,
we can define a set $A_1:=A_{i,1}$ which does not depend on the choice of $i\in[t]$.
This gives us a canonical partition $A=A_1\cupdot A\setminus A_1$ of the set $A$ we are focusing on.
With a similar idea as for hypergraphs, we are able to use the partition of $A$
to define a partition $L=L_1\cupdot\ldots\cupdot L_s$
of $L$ into bundles.
In the extreme cases ($s=t$ or
$s=1$), we use the concrete definition of the partition to reduce the size of the set $A$, making progress.
The case of a proper partition can be handled by bundling and recursion as usual (one has to be
careful that (C) and (AC) are maintained for the recursive call).
Also in the remaining subcase where some of the orbits $A_{i,1}\subseteq A$
being equal to others but not to all, we can define a partition into bundles and recurse on that.

Second, we consider the case in which all groups $\Theta_i$ are transitive on the focus set $A$.
A crucial difference here is that instead
of splitting the underlying vertex $A$ by decomposing only one single group,
we split all the labeling cosets $\Theta_i\tau_i$ by decomposing them into left cosets.
This leads to the case of intransitive groups $\Theta_i$ which we already
handled.

\paragraph{Detailed Description of the Sets of Labeling Cosets Algorithm} 
Proving Lemma~\ref{lem:canSet}, we give a detailed description and analysis of the algorithm for sets of labeling cosets.

\begin{proof}[Proof of Lemma~\ref{lem:canSet}]
We generalize the problem and instead of $J$
we take as an input a set of pairs of labeling cosets
$L=\{(\Delta_1\rho_1,\Theta_1\tau_1),\ldots,(\Delta_t\rho_t,\Theta_t\tau_t)\}$.
For the purpose of recursion, we need some additional input parameters.
Specifically, we use subsets $A,C\subseteq V$ such that
$\Theta_i\leq\stab(A,C)$ for all $i\in[t]$.
Initially, we set $A=V$ and $C=\emptyset$
and $\Theta_i\tau_i=\Delta_i\rho_i$ for all $i\in[t]$. Furthermore, we require
that \begin{enumerate}[(AC)]
  \item[\textnormal{(C)}] $\Theta_i\tau_i|_C=
\Theta_j\tau_j|_C$ for all $i,j\in[t]$, and
  \item[\textnormal{(AC)}] $\Theta_i\tau_i|_{A\cup C}
\neq\Theta_j\tau_j|_{A\cup C}$
for all $i\neq j$.
\end{enumerate}

\algorithmDAN{\canSet(L,A,C,\Delta\rho)}
\begin{cs}

\case{$|L|=0$}
Return $\canInt(\lab(A)\uparrow^V,\lab(C)\uparrow^V;\Delta\rho)$.\\
\iterated
\case{$|A|\leq 1$}~\\
Rename the indices in $[t]$ such that
$A^{\tau_1}\prec\ldots\prec A^{\tau_t}$.\\
\emph{(The ordering is strict, for a proof see
(Correctness.) below the algorithm.)}\\
Return
$\Lambda:=\canInt(\Delta_1\rho_1,\ldots,\Delta_t\rho_t,\Theta_1\tau_1,\ldots,\Theta_t\tau_t;\Delta\rho)$.\\
\iterated
\end{cs}
Let $A_i^\ord:=A^{\tau_i},C_i^\ord:=C^{\tau_i}$ and
$\Theta_i^\ord:=(\Theta_i\tau_i)^{\tau_i}$ for each $i\in[t]$.
\begin{cs}
\case{$(A_i^\ord,C_i^\ord,\Theta_i^\ord)\neq(A_j^\ord,C_j^\ord,\Theta_j^\ord)$
for some $i,j\in[t]$}~\\
Define an \emph{ordered} partition $ L= L_1\cupdot\ldots\cupdot L_s$ such
that:\\
$(A_i^\ord,C_i^\ord,\Theta_i^\ord)\prec (A_j^\ord,C_j^\ord,\Theta_j^\ord)$,
if and only if\\
$(\Delta_i\rho_i,\Theta_i\tau_i)\in L_p$ and $(\Delta_j\rho_j,\Theta_j\tau_j)\in
 L_q$ for some $p,q\in[s]$ with~$p<q$.\\
Recurse and return $\Lambda:=\canSet((L_1,A,C),\ldots,
(L_s,A,C);\Delta\rho)$.\\
\iterated
\end{cs}
\emph{Now, we have the following property (D):
$(A_i^\ord,C_i^\ord,\Theta_i^\ord)=(A_j^\ord,C_j^\ord,\Theta_j^\ord)$ for all
$i,j\in[t]$.}
\begin{cs}
\case{$\Theta_i$ is intransitive on $A$ for some (and because of (D) for
all) $i\in[t]$}~\\
For each $i\in[t]$ write $A=A_{i,1}\cupdot A_{i,2}$ where
$A_{i,1}$ is a $\Theta_i$-orbit and such that
$A_{i,1}^{\tau_i}$ is minimal.\\
\minimal
\begin{cs}
\case{$A_{i,1}= A_{j,1}$ for all $i,j\in[t]$}~\\
Let $A_1:=A_{i,1}$ for some (and thus all) $i\in[t]$.\\
Let $\Lambda_i:=\Theta_i\tau_i|_{A_1\cup C}$ for each $i\in[t]$.\\
Define an (unordered) partition $L= L_1\cupdot\ldots\cupdot L_s$ with
$\CL:=\{L_1,\ldots,L_s\}$ such
that:\\
$\Lambda_i=\Lambda_j$,
if and only if
$(\Delta_i\rho_i,\Theta_i\tau_i),(\Delta_j\rho_j,\Theta_j\tau_j)\in
L_\ell$ for some $\ell\in[s]$.

\begin{cs}
\case{$s=t$}
\emph{(Thus $\Lambda_i\neq\Lambda_j$ for all $i\neq j$.)}\\
Recurse and return $\Lambda:=\canSet(L,A_1,C,\Delta\rho)$

\case{$s=1$}
\emph{(Thus $\Lambda_i=\Lambda_j$ for all $i,j\in[t]$.)}\\
Recurse and return $\Lambda:=\canSet(L,A_2,A_1\cup C,\Delta\rho)$

\case{$1<s<t$}
\emph{(Thus $\Lambda_i=\Lambda_j$ for some $i\neq j$, but not
for all $i\neq j$.)}\\
\emph{(We compute a canonical labeling for $\CL$.)}\\
Compute $\Pi_i\eta_i:=\canSet(L_i,A_2,A_1\cup C,\Delta\rho)$ for
each $i\in[s]$ recursively.\\
For each $L_i$ pick a $(\Delta_j\rho_j,\Theta_j\tau_j)\in L_i$ and set
$\hat\Gamma_i:=\Lambda_j$.\\
\emph{(The set~$\hat\Gamma_i$ does not depend on the choice of $j$, but $\hat\Gamma_i$ is not a
labeling coset.)}\\
Let
$\Gamma_i:=\{\gamma\in\lab(V)\mid \gamma|_{A_1\cup C}\in\hat\Gamma_i\}\leq\lab(V)$.\\
\emph{(The definition of $\Gamma_i$ rectifies that $\hat\Gamma_i$ is not a labeling coset.)}\
\\
Set $\CL^\set:=\{(\Pi_1\eta_1,\Gamma_1),\ldots,(\Pi_s\eta_s,\Gamma_s)\}$.\\
Define an \emph{ordered} partition
$\CL^\set=\CL_1^\set\cupdot\ldots\cupdot\CL^\set_r$ such that:\\
$(L_i,\Delta\rho)^{\eta_i}
\prec (L_j,\Delta\rho)^{\eta_j}$, if and only if
$(\Pi_i\eta_i,\Gamma_i)\in\CL^\set_p$ and $(\Pi_j\eta_j,\Gamma_j)\in\CL^\set_q$
for some~$p,q\in[r]$ with~$p<q$.\\
Recurse and return
$\Lambda:=\canSet((\CL^\set_1,A_1, C),\ldots,(\CL^\set_r,A_1,C);\Delta\rho)$.\\
\iterated
\end{cs}

\case{$A_{i,1}\neq A_{j,1}$ for some $i,j\in[t]$}~\\
Define an (unordered) partition $L= L_1\cupdot\ldots\cupdot L_s$ with
$\CL:=\{L_1,\ldots,L_s\}$ such
that:\\
$A_{i,1}=A_{j,1}$,
if and only if
$(\Delta_i\rho_i,\Theta_i\tau_i),(\Delta_j\rho_j,\Theta_j\tau_j)\in
L_\ell$ for some $\ell\in[s]$.\\
\emph{(Observe that $1<s\leq t$.)}\\
\emph{(We compute a canonical labeling for $\CL$.)}\\
Compute $\Pi_i\eta_i:=\canSet(L_i,A,C,\Delta\rho)$ for
each $i\in[s]$ recursively.\\
For each $L_i$ pick a $(\Delta_j\rho_j,\Theta_j\tau_j)\in L_i$ and set
$S_i:=A_{j,1}$.\\
\emph{(The set~$S_i$ does not depend on the choice of $j$.)}\\
Set $\CL^\setHyper:=\{(\Pi_1\eta_1,S_1),\ldots,(\Pi_s\eta_s,S_s)\}$.\\
Define an \emph{ordered} partition
$\CL^\setHyper=\CL_1^\setHyper\cupdot\ldots\cupdot\CL^\setHyper_r$ such that:\\
$(L_i,\Delta\rho)^{\eta_i}
\prec (L_j,\Delta\rho)^{\eta_j}$, if and only if
$(\Pi_i\eta_i,S_i)\in\CL^\setHyper_p$ and $(\Pi_j\eta_j,S_j)\in\CL^\setHyper_q$
and $p<q\in[r]$.\\
Return
$\Lambda:=\canSetHyper((\CL^\setHyper_1,V).\ldots,(\CL^\setHyper_r,V);\Delta\rho)$
using \cref{lem:canSetHyper}.\\
\iterated
\end{cs}

\case{$\Theta_i$ is transitive on $A$ for some
(and because of (D) for all)
$i\in[t]$}~\\
Let $A^\ord:=A_i^\ord,C^\ord:=C_i^\ord$ and
$\Theta^\ord:=\Theta_i^\ord$ for some $i\in[t]$.\\
\emph{(Because of (D), the definitions do not depend on the choice of $i\in[t]$.)}\\
Partition $A^\ord=A^\ord_1\cupdot A^\ord_2$ such that
$|A^\ord_1|=\lfloor|A^\ord|/2\rfloor$ and
$A^\ord_1$ is minimal.\\
\minimal\\
Define $\Psi^\ord:=\stab_{\Theta^\ord}(A^\ord_1,A^\ord_2)$.\\
\emph{(The group can be computed using a membership test as
stated in the preliminaries.)}\\
Decompose $\Theta^\ord$ into left cosets of $\Psi^\ord$ and write
$\Theta^\ord=\bigcup_{i\in[s]}\theta^\ord_i\Psi^\ord$.\\
Let $\Theta_C^\ord:=\Theta^\ord|_{C^\ord}$ and
$\Psi_C^\ord:=\Psi^\ord|_{C^\ord}$.\\
Decompose $\Theta_C^\ord$ into
left cosets of $\Psi_C^\ord$ and write
$\Theta_C^\ord=\bigcup_{i\in[r]}\theta_{C,i}^\ord\Psi_C^\ord$.\\
\emph{(Observe that $r\leq s$.)}\\
For some $j\in[t]$, define a set of cosets $N:=\{\Gamma_1,\ldots,\Gamma_r\}$
where $\Gamma_i:=\tau_j|_C\theta_{C,i}^\ord\Psi_C^\ord$.\\
\emph{(Because of (C), the set $N$ does not depend on the choice of
$j\in[t]$.)}\\
Define
$L_i:=\{(\Delta_j\rho_j,\tau_j\theta^\ord_\ell\Psi^\ord)\mid
j\in[t],\ell\in[s],(\tau_j\theta^\ord_\ell\Psi^\ord)|_C=\Gamma_i\}$ for each
$i\in[r]$.\\
Compute $\Pi_i\eta_i:=\canSet(L_i,A,C,\Delta\rho)$ for each
$i\in [r]$ recursively.\\
Rename the indices of elements in $N=\{\Gamma_1,\ldots,\Gamma_r\}$ such that:\\
$(L,\Delta\rho)^{\eta_1}=\ldots=(L,\Delta\rho)^{\eta_q}
\prec(L,\Delta\rho)^{\eta_{q+1}}\preceq\ldots\preceq
(L,\Delta\rho)^{\eta_r}$.\\
\emph{(The ordering does not depend on the choice of the
$\eta_i$, for a proof see (CL1.).)}\\
Return $\Lambda:=
\langle\Pi_1\eta_1,\ldots,\Pi_q\eta_q\rangle$.\\
\emph{(This is the smallest coset containing these cosets as defined
in the preliminaries.)}
\end{cs}

\begin{correctness}
We consider  the case $|A|\leq 1$.
We show that we obtain a strict ordering indeed.
If $A=\emptyset$, then because of (C) and (AC),
it holds that $|L|=1$, so the ordering is strict.
Let us assume that $|A|=1$.
Since $A$ is $\Theta_i$-invariant and consists of one element,
it holds that $\Theta_i|_{A\cup C}$ is a direct product
of $\Theta_i|_A$ and $\Theta_i|_C$ for all $i\in[t]$.
But because of (C), it holds that $(\Theta_i\tau_i)|_A\neq
(\Theta_j\tau_j)|_A$ for all $i\neq j$
and therefore the ordering is strict.
\end{correctness}

\begin{AC}
We show that (C) and (AC) remain satisfied for the recursive calls.
Consider the case $(A_i^\ord,C_i^\ord,\Theta_i^\ord)\neq(A_j^\ord,C_j^\ord,\Theta_j^\ord)$.
Observe that if (C) and (AC) hold for the set $L$,
then they also hold for each subset of $L_i\subseteq L$.
Therefore, in this case (C) and (AC) remain satisfied.

Consider the intransitive case, when $A_{i,1}=A_{j,1}$.
In the cases $s=t$ and $s=1$, the conditions (C) and (AC)
 hold for chosen subsets of~$A$ by construction.
In the Case $1<s<t$, the labeling cosets $\Gamma_i$
satisfy (C) and (AC) by construction of the partition.

Consider the intransitive case, when $A_{i,1}\neq A_{j,1}$.
Here, we only have recursive calls for subsets $L_i\subseteq L$ with~$A$ and~$C$ unchanged.

Consider the transitive case.
Condition (C) holds by construction.
Next, we show that also (AC) remains satisfied.
Condition (D) implies
$\Theta^\ord_i|_{A^\ord_i\cup C^\ord_i}=\Theta^\ord_j|_{A^\ord_j\cup C^\ord_j}$
which means that $\Theta_i\tau_i|_{A\cup
C}$ and $\Theta_j\tau_j|_{A\cup C}$ are cosets of the same group.
Together with (AC), this implies that $\Theta_i\tau_i|_{A\cup
C}\cap\Theta_j\tau_j|_{A\cup C}=\emptyset$.
For this reason, also
subsets of these cosets are disjoint, i.e.,
$\tau_i\theta^\ord_\ell\Psi^\ord|_{A\cup C}
\cap\tau_{j}\theta^\ord_{\ell'}\Psi^\ord
|_{A\cup C}=\emptyset$ for
$i\neq j$ and $\ell,\ell'\in[s]$.
The same disjointness holds
for $i=j$ and $\ell\neq\ell'$
since $(\theta^\ord_\ell|_{A^\ord\cup C^\ord})^{-1}
\theta^\ord_{\ell'}|_{A^\ord\cup C^\ord}\in\Psi^\ord|_{A^\ord\cup C^\ord}$
implies $\ell=\ell'$.
In particular, the cosets
$\tau_i\theta^\ord_\ell\Psi^\ord|_{A\cup C}
$ and $\tau_{j}\theta^\ord_{\ell'}\Psi^\ord
|_{A\cup C}$
are distinct
for $(i,\ell)\neq(j,\ell')$.
\end{AC}

\begin{cl1}
This is not hard to see as
almost all ordered sequences and
all partitions are defined
in an isomorphism-invariant way.
There is only one case that needs attention.
In the transitive case, we define an ordering
that might a priori depend on the choice of $\eta_i$ for
$i\in[r]$.
To prove that this is indeed not the case,
we claim that $\aut(L_i)\subseteq
\aut(L)$.
Let $\sigma\in\aut(L_i)$ for $i\in[r]$.
Let $j\in[t]$.
There is at least one $\ell\in[s]$ such that
$(\Delta_j\rho_j,\tau_j\theta^\ord_{\ell}\Psi^\ord)\in L_i$
since
$\{(\tau_j\theta^\ord_\ell\Psi^\ord)|_C\mid \ell\in[s]\}=
\{\tau_j|_C\theta_{C,i}^\ord\Psi_C^\ord\mid i\in[r]\}=
\{\Gamma_1,\ldots,\Gamma_r\}$.
By definition of $\aut(L_i)$, there are $j'\in[t]$
and $\ell'\in[s]$ such that
$(\Delta_j\rho_j,\tau_j\theta^\ord_{\ell}\Psi^\ord)^\sigma=
(\Delta_{j'}\rho_{j'},\tau_{j'}\theta^\ord_{\ell'}\Psi^\ord)$.
This implies
$(\Delta_j\rho_j,\Theta_j\tau_j)^\sigma=
(\Delta_{j'}\rho_{j'},\Theta_{j'}\tau_{j'})$.
Since for all $j\in[t]$, there is such a $j'\in[t]$,
this implies $\sigma\in\aut(L)$.

\end{cl1}

\begin{cl2}
In the Case $|A|\leq 1$, the strict ordering of the
sequence implies
that $\sigma^{-1}\Theta_i\tau_i\neq \Theta_j\tau_j$
for all $\sigma\in\aut(A)$ and all $i\neq j$.
Therefore, $\aut((L,A,C,\Delta\rho))
=\Delta_1\cap\ldots\cap\Delta_t\cap\Theta_1\cap\ldots\cap\Theta_t\cap\stab(A,C)\cap\Delta$
where $\stab(A,C)$ can be omitted since $\Theta_1\leq\stab(A,C)$.

We consider the Case $(A_i^\ord,C_i^\ord,\Theta_i^\ord)\neq(A_j^\ord,C_j^\ord,\Theta_j^\ord)$.
Here, $\CL=(L_1,\ldots,L_s)$ is an isomorphism-invariant \emph{ordered}
partition of $L=L_1\cupdot\ldots\cupdot L_s$ and therefore $\aut(\CL)=\aut(L)$.

Consider the intransitive case, when $A_{i,1}=A_{j,1}$
and $s=t$ or $s=1$.
Condition (D) and $L\neq\emptyset$ also ensure $\aut(L)\leq\stab(A,C)$.
Therefore, changing these parameters
in an isomorphism-invariant way
does not affect the automorphism group.
In the Case $1<s<t$, the labeling coset~$\Lambda$ defines a canonical labeling
for $(\CL_1^\set,\ldots,\CL_r^\set,\Delta\rho)$ by induction.
Since $\Gamma_i$ was chosen in an isomorphism-invariant way,
$\Lambda$ defines a canonical labeling for
$(\CJ^\set_1,\ldots,\CJ^\set_r,\Delta\rho)$
where $\CJ^\set_i=\{\Lambda\mid (\Lambda,\Gamma)\in \CL^\set_i\}$
for $i\in[r]$.
Because of the object replacement lemma (\cref{lem:rep}),
it holds that $\Lambda$ defines
a canonical labeling for $(\CJ_1,\ldots,\CJ_r,\Delta\rho)$
where $\CJ=\CJ_1\cupdot\ldots\cupdot\CJ_r$
such that
$L_i\in \CJ_\ell$, if and only if $\Pi_i\eta_i\in\CJ^\set_\ell$.
Since $(\CJ_1,\ldots,\CJ_r)$ is an isomorphism-invariant \emph{ordered} partition
of $\CJ=\CL$, it holds that
$\Lambda$ defines a canonical labeling for $(\CL,\Delta\rho)$.
Again, $\CL=\{L_1,\ldots,L_s\}$ is an isomorphism-invariant
(unordered) partition
of $L=L_1\cupdot\ldots\cupdot L_s$
and therefore $\Lambda$ defines a canonical labeling for $(L,\Delta\rho)$.

Consider the intransitive case, when $A_{i,1}\neq A_{j,1}$.
Here, the proof is analogous to the previous Case $1<s<t$.
Also here, $\Lambda$ defines a canonical labeling
of $(\CL_1^\setHyper,\ldots,\CL_r^\setHyper,\Delta\rho)$.
Since $S_i$ was chosen in an isomorphism-invariant way,
the labeling coset~$\Lambda$ defines a canonical labeling for
$(\CJ^\setHyper_1,\ldots,\CJ^\setHyper_r,\Delta\rho)$
where $\CJ^\setHyper_i=\{\Lambda\mid (\Lambda,S)\in \CL^\setHyper_i\}$
for $i\in[r]$.
Because of the object replacement lemma (\cref{lem:rep}),
it holds that $\Lambda$ defines
a canonical labeling for $(\CJ_1,\ldots,\CJ_r,\Delta\rho)$
where $\CJ=\CJ_1\cupdot\ldots\cupdot\CJ_r$
such that
$L_i\in \CJ_\ell$, if and only if $\Pi_i\eta_i\in\CJ^\setHyper_\ell$.
Since $(\CJ_1,\ldots,\CJ_r)$ is an isomorphism-invariant \emph{ordered} partition
of $\CJ=\CL$, it holds that
$\Lambda$ defines a canonical labeling for $(\CL,\Delta\rho)$.
Again, $\CL=\{L_1,\ldots,L_s\}$ is an isomorphism-invariant
(unordered) partition
of $L=L_1\cupdot\ldots\cupdot L_s$
and therefore $\Lambda$ defines a canonical labeling for $(L,\Delta\rho)$.

Consider the transitive case.
We claim that $\Lambda=\aut((L,\Delta\rho))\pi$ for some
(and thus for all) $\pi\in\Lambda$.
The inclusion
$\aut((L,\Delta\rho))\pi\subseteq\Lambda$ follows from
(CL1) of this algorithm.
It remains to show $\Lambda\subseteq\aut((L,\Delta\rho))\pi$.
We need to show that  $\eta_i\eta_j^{-1}$
is an element in $\aut((L,\Delta\rho))$ for all $i,j\in[q]$.
The membership
follows from the fact that $(L,\Delta\rho)^{\eta_i}=(L,\Delta\rho)^{\eta_j}$. 
\end{cl2}

\begin{runtime}
Let $A^*\subseteq A$ be a $\Theta_i$-orbit of maximal size
over all $i\in[t]$.
We claim that the maximum number of recursive calls given these parameters $R(|A^*|,|A|,|L|)$  is at most $T:=2^{14|A^*|}|A||L|^3$.

In the
Case $(A_i^\ord,C_i^\ord,\Theta_i^\ord)\neq(A_j^\ord,C_j^\ord,\Theta_j^\ord)$
and in the
intransitive cases, we make progress on $|A|$ or $|L|$
similar to the analysis of \cref{lem:canSetHyper}.

We consider the transitive case.
Here, we have recursive calls on $L_1,\ldots,L_r$.
Observe that $\sum^r_{i=1} |L_i|=s|L| \leq 2^{|A|}|L|$.
For the recursive calls, we make progress on $|A^*|=|A|$.
\begin{align*}
R(|A|,|A|,|L|)&\leq 1+ \sum^r_{i=1}R(\lceil|A|/2\rceil,|A|,|L_i|)\\
&\overset{\mathclap{\text{induction}}}{\leq}
1+ \sum^r_{i=1}T(\lceil|A|/2\rceil,|A|,|L_i|)\\
&\leq
1+ T(|A|/2+1/2,|A|,\sum^r_{i=1}|L_i|)
\leq1+2^{10|A|+7}|A||L|^3
\overset{2\leq |A|}{\leq}
T.
\end{align*}
This gives at most $T$ recursive
calls for the instance $(L,A,C,\Delta\rho)$.

Summing up, our argument so far
shows a bound of $2^{14|A^*|}|A||L|^3$ on the number of recursive calls, where~$A^*$ is the maximum size $\Theta_i$-orbit within~$A$. Applying the algorithm on an instance $(J,\Delta\rho)$  gives us a bound of $2^{\CO(k)}|V||J|^3$ on the number of recursive calls, where $k$ is the size
of the largest $\Delta_i$-orbit of $V$
over all $i\in[t]$.
However, the running time claimed by the theorem is in terms of the largest $\Delta$-orbit rather than the $\Delta_i$-orbits. 
To achieve this, we add a preprocessing step before the algorithm that ensures that the~$\Delta_i$-orbits are no larger than the~$\Delta$-orbits.

Given the instance $(J,\Delta\rho)$, this can be done as follows.
Compute $\Delta_i'\rho_i':=\canInt(\Delta_i\rho_i,\Delta\rho)$
for each $\Delta_i\rho_i\in J$.
Set $J':=\{\Delta_1'\rho_1',\ldots,\Delta_t'\rho_t'\}$.
Then define an \emph{ordered} partition 
$J'=J'_1\cup\ldots\cup J'_s$ such that:
$(\Delta_i\rho_i,\Delta\rho)^{\rho_i'}
\prec (\Delta_j\rho_j,\Delta\rho)^{\rho_j'}$, if and only if
$\Delta_i'\rho_i'\in J'_p$ and $\Delta_j'\rho_j'\in J'_q$ and
$p<q\in[s]$.\\
This gives a new (iterated) instance $(J'_1,\ldots,J'_s,\Delta\rho)$
which can be processed by our algorithm
that we just presented.
A canonical labeling for
the new instance defines
a canonical labeling also for $(J,\Delta\rho)$
by the object replacement lemma.
Furthermore,
the new instance has the property that
for all $\Delta_i'\rho_i'\in J'_j$ and all $j\in[s]$ the~$\Delta'_i$-orbits are not larger than the~$\Delta$-orbits.
\end{runtime}
\end{proof}
\section{Canonization of Objects}\label{sec:obj}

As argued in \cref{lem:rep}, for the purpose of a canonical
labeling, objects can be replaced with their canonical labelings.
With \cref{lem:canSet}, we are able to compute canonical labelings
for a set of labeling cosets.
In this section, we will combine both techniques to compute
a canonical labeling function for objects in general.

\problem{\canObj\label{prob:CL:Obj}}
{$(\CX,\Delta\rho)\in\obj(V)$ where
$\Delta\rho\leq\lab(V)$ and $V$ is an unordered set}
{(\CX,\Delta\rho)}
{(\CX^\phi,\phi^{-1}\Delta\rho)}
{\aut((\CX,\Delta\rho))}

\begin{theo}\label{theo:obj}

A function $\canObj$ solving
\cref{prob:CL:Obj}
can be computed in time
$2^{\CO(k)}n^{\CO(1)}$
where $n$ is the input size and
$k$ is the size of the largest $\Delta$-orbit of $V$.
\end{theo}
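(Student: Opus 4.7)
The plan is to give a recursive algorithm on the inductive structure of $\CX$ as a hereditarily finite set, so that the recursion is well-founded with depth bounded by $|\tranCl(\CX)| \leq n$, and each case reduces either to a previously established canonization subroutine or to recursive calls on strictly smaller subobjects.

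\textbf{Base cases (atoms).} If $\CX = v \in V$ is a vertex, invoke $\canPoint(v, \Delta\rho)$ from \cref{lem:canPoint}. If $\CX = \Theta\tau \leq \lab(V)$ is a labeling coset, invoke $\canInt(\Theta\tau, \Delta\rho)$ from \cref{lem:canInt}.

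\textbf{Tuple case.} If $\CX = (X_1, \ldots, X_t)$, process the components as an iterated instance in the sense of \cref{sec:atoms}, recursively calling $\canObj$ on each $X_i$ in turn while threading the updated labeling coset through. Correctness follows because ordered tuples preserve the automorphism group under componentwise intersection.

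\textbf{Set case.} If $\CX = \{X_1, \ldots, X_t\}$, first recursively compute $\Delta_i\rho_i := \canObj(X_i, \Delta\rho)$ for every $i \in [t]$. Using the linear order $\prec$ from \cref{lem:prec} applied to the ordered forms $X_i^{\rho_i}$, partition the indices into the isomorphism classes of the $X_i$: by (CL2) of $\canObj$, two objects $X_i, X_j$ are isomorphic exactly when $X_i^{\rho_i} = X_j^{\rho_j}$. For each isomorphism class $C$, the hypothesis of the object replacement lemma (\cref{lem:rep}) is then satisfied, and canonizing $\{X_i : i \in C\}$ reduces to canonizing the set of labeling cosets $\{\Delta_i\rho_i : i \in C\}$, which is done by invoking $\canSet$ from \cref{lem:canSet}. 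Finally, sort the isomorphism classes by $\prec$ applied to their common ordered form and combine the resulting labeling cosets as an iterated instance.

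\textbf{Correctness and running time.} Properties (CL1) and (CL2) follow by induction on $|\tranCl(\CX)|$, since all partitions and orderings are defined in an isomorphism-invariant way and each subroutine satisfies its respective invariants. The key running-time observation is that the $\Delta$-orbits of $V$ do not grow in recursive calls, since the context labeling coset $\Delta\rho$ is either preserved or replaced by a labeling subcoset arising from an intersection; hence $k$ is a global parameter of the recursion. The recursion has at most $|\tranCl(\CX)| \leq n$ levels, and within each level the dominant cost comes from a call to $\canSet$ (or $\canInt$), which runs in time $2^{\CO(k)}n^{\CO(1)}$. The main obstacle will be to verify that the isomorphism-class partition for the set case is itself canonical and reduces correctly via object replacement — this is precisely the interplay between \cref{lem:rep} and (CL2) applied to the recursive calls, and it is the place where the framework's treatment of labeling cosets as objects becomes essential. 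Summing polynomially many subroutine invocations yields the claimed $2^{\CO(k)} n^{\CO(1)}$ total running time.
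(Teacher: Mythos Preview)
Your overall architecture matches the paper's proof: structural recursion on~$\CX$, base cases via $\canPoint$ and $\canInt$, and the set case handled by recursively canonizing the $X_i$, partitioning by isomorphism type using ``$\prec$'' on the ordered forms, and invoking $\canSet$ through the object replacement lemma. The correctness arguments (CL1), (CL2) are fine.

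There is, however, a genuine gap in the running-time argument, and it originates in your treatment of tuples. You propose to process $\CX=(X_1,\ldots,X_t)$ as an iterated instance of $\canObj$ itself, i.e.\ threading the coset: $\Lambda_1:=\canObj(X_1,\Delta\rho)$, $\Lambda_2:=\canObj(X_2,\Lambda_1)$, and so on. This is correct for (CL1) and (CL2), but it destroys the memoization that is needed for the time bound. Objects are encoded as DAGs over $\tranCl(\CX)$, so the same subobject $\CY$ can occur along exponentially many root-to-leaf paths; for instance, with $Y_0=v$ and $Y_{i+1}=(Y_i,Y_i)$ we have $|\tranCl(Y_m)|=m+1$ and hence input size~$n$ polynomial in $m+|V|$, yet your threaded recursion visits $Y_0$ with $2^m$ different context cosets and thus makes $2^m$ calls. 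Your sentence ``the recursion has at most $|\tranCl(\CX)|\leq n$ levels'' does not bound the number of calls, and ``summing polynomially many subroutine invocations'' is exactly the unjustified step.

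The paper avoids this by never changing $\Delta\rho$ in recursive calls to $\canObj$: for a tuple it computes $\Delta_i\rho_i:=\canObj(X_i,\Delta\rho)$ with the \emph{original} $\Delta\rho$ for every~$i$, and only afterwards combines the resulting labeling cosets via the iterated instance $\canInt(\Delta_1\rho_1,\ldots,\Delta_t\rho_t;\Delta\rho)$. Because every recursive call is now of the form $\canObj(\CY,\Delta\rho)$ with a fixed second argument, one can tabulate the answers over $\CY\in\tranCl(\CX)$ by dynamic programming, giving at most $|\tranCl(\CX)|\in\CO(n)$ calls in total. Your set case already does this (you write $\Delta_i\rho_i:=\canObj(X_i,\Delta\rho)$ there); applying the same pattern to tuples closes the gap.
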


\begin{proof}

\algorithmDAN{\canObj(\CX,\Delta\rho)}
\begin{cs}

\case{$\CX=v\in V$} Return $\canPoint(v,\Delta\rho)$.

\case{$\CX=\Theta\tau\leq\lab(V)$} Return $\canInt(\Theta\tau,\Delta\rho)$.

\case{$\CX=(X_1,\ldots,X_t)$}~\\
Compute $\Delta_i\rho_i:=\canObj(X_i,\Delta\rho)$
for each $i\in[t]$ recursively.\\
Return
$\Lambda:=\canInt(\Delta_1\rho_1,\ldots,\Delta_t\rho_t;\Delta\rho)$
using the algorithm from \cref{lem:canInt}.\\
\iterated

\case{$\CX=\{X_1,\ldots,X_t\}$}~\\
Compute $\Delta_i\rho_i:=\canObj(X_i,\Delta\rho)$
for each $i\in[t]$ recursively.\\
Set $\CX^\set:=\{\Delta_1\rho_1,\ldots,\Delta_t\rho_t\}$.\\
Define an \emph{ordered} partition
$\CX^\set=\CX^\set_1\cup\ldots\cup\CX^\set_s$
such that:\\
$X_i^{\rho_i}
\prec X_j^{\rho_j}$,
if and only if
$\Delta_i\rho_i\in\CX^\set_p$ and $\Delta_j\rho_j\in\CX^\set_q$ for some
$p,q\in[s]$ with~$p<q$.\\
Return $\Lambda:=\canSet(\CX^\set_1,\ldots,\CX^\set_s;\Delta\rho)$
using the algorithm from \cref{lem:canSet}.\\
\iterated
\end{cs}

\begin{cl1}
This is easy to see as the partition of $\CX^\set$
is defined in an isomorphism-invariant way.
\end{cl1}

\begin{cl2}
The Case $\CX=\{X_1,\ldots,X_t\}$ follows
from the object replacement lemma (\cref{lem:rep}). The other cases use
canonical labeling functions we described in previous sections.
\end{cl2}

\begin{runtime}
With a dynamic programming approach, we build up a table
in which we store a
canonical labeling for each $(\CY,\Delta\rho)$
with $\CY\in\tranCl(\CX)$. Note that in each recursive call to~$\canObj$
the coset~$\Delta \rho$ remains unchanged and thus each recursive call is indeed
of the form $(\CY,\Delta\rho)$ with $\CY\in\tranCl(\CX)$. We thus get at
most~$|\tranCl(\CX)|\in \CO(n)$ recursive calls.
For the (non-recursive) calls to~$\canInt$ and~$\canSet$, we use the respective
algorithms from the previous sections to compute the solutions within the desired running time.
\end{runtime}
\end{proof}

\begin{cor}
Canonical labelings for combinatorial
objects
can be computed in time $2^{\CO(k)}n^{\CO(1)}$
where $n$ is the input size and
$k$ is the size of the largest color class of $V$.
\end{cor}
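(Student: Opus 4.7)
The plan is to reduce the statement directly to \cref{theo:obj} by encoding a coloring of~$V$ into the input coset~$\Delta\rho$. Given a combinatorial object~$\CX\in\obj(V)$ together with a coloring $V=C_1\cupdot\ldots\cupdot C_t$, I would construct the labeling coset
\[\Delta\rho:=\{\lambda\in\lab(V)\mid\forall i<j\in[t]\,\forall v_i\in C_i,v_j\in C_j:\lambda(v_i)<\lambda(v_j)\},\]
exactly as in the hypergraph corollary \cref{cor:hyperK}. This coset consists of all labelings that map $C_1$ onto $\{1,\ldots,|C_1|\}$, then $C_2$ onto $\{|C_1|+1,\ldots,|C_1|+|C_2|\}$, and so on. Consequently its stabilizer subgroup is $\Delta=\sym(C_1)\times\cdots\times\sym(C_t)$, whose orbits on $V$ are precisely the color classes $C_1,\ldots,C_t$.

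With this choice of $\Delta\rho$ the size of the largest $\Delta$-orbit of~$V$ is exactly the size of the largest color class, so applying~$\canObj$ of \cref{theo:obj} to the pair $(\CX,\Delta\rho)$ runs in time $2^{\CO(k)}n^{\CO(1)}$. The output $\Lambda=\canObj(\CX,\Delta\rho)$ is a labeling coset satisfying Conditions (CL1) and (CL2) for the pair $(\CX,\Delta\rho)$; since the construction of $\Delta\rho$ from the coloring is isomorphism invariant (it depends only on the partition of $V$ into color classes and their ordering), $\Lambda$ also satisfies (CL1) and (CL2) for the colored object~$\CX$, i.e.\ it is a canonical labeling of the colored object.

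There is no real obstacle here: the only thing worth verifying carefully is that the reduction is indeed isomorphism invariant, which follows from the fact that any color-preserving isomorphism $\phi:V\to V'$ sends the coset $\Delta\rho$ for~$V$ to the analogous coset $\Delta'\rho'$ for $V'$, so $\phi^{-1}\Delta\rho=\Delta'\rho'$ and the invariance property of $\canObj$ transfers one-to-one to the colored setting. If no coloring is given, one simply takes the trivial coloring $t=1$, $C_1=V$, in which case $\Delta\rho=\lab(V)$ and $k=|V|$, recovering the uncolored version.
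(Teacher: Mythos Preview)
Your proposal is correct and takes essentially the same approach as the paper: construct the labeling coset $\Delta\rho$ from the ordered coloring exactly as in \cref{cor:hyperK} and apply \cref{theo:obj} to $(\CX,\Delta\rho)$. The paper's proof is actually briefer than yours, omitting the explicit verification of orbit structure and isomorphism invariance that you spell out.
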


\begin{proof}
Given an
object $\CX\in\obj(V)$
and a coloring $(C_1,\ldots,C_t)$
of $V=C_1\cupdot\ldots\cupdot C_t$,
we use the previous algorithm to compute
a canonical labeling for $(\CX,\Delta\rho)$
where $\Delta\rho=\{\lambda\in\lab(V)\mid\forall i,j\in[t],i<j \forall v_i\in
C_i,v_j\in C_j:
\lambda(v_i)<\lambda(v_j)\}$.
\end{proof}

\paragraph{Canonization of Strings and Codes}
A \emph{string} or \emph{code word} with positions $V$
over a finite alphabet $\Sigma$ is a function
$\mathfrak{x}:V\to\Sigma$.
A \emph{code} is a set of code words.

Isomorphism of codes, also known as code equivalence,
was considered in \cite{DBLP:conf/soda/BabaiCGQ11}
and in \cite{DBLP:conf/icalp/BabaiCQ12}.
With our general result for objects, we
achieve the same time bound even for canonization.

\begin{cor}
Canonical labelings for codes
can be computed in time $2^{\CO(|V|)}|\CA|^{\CO(1)}$
where $|V|$ is the length of the code words and
$|\CA|$ is the size of the code (i.e., the number of code words).
\end{cor}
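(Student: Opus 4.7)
The plan is to encode the code as a combinatorial object and invoke \cref{theo:obj}. Since a permutation of positions leaves alphabet symbols fixed, I represent each code word $\mathfrak{x}:V\to\Sigma$ by the tuple
$X(\mathfrak{x}) := (\mathfrak{x}^{-1}(\sigma_1),\ldots,\mathfrak{x}^{-1}(\sigma_m)) \in \obj(V)$,
where $\sigma_1,\ldots,\sigma_m$ is an isomorphism-invariant enumeration of the alphabet symbols occurring in $\CA$ (for instance their order of first appearance in a fixed scan of the input, or any intrinsic ordering of $\Sigma$ that does not reference $V$). The code itself is then encoded as the set-object $\CX := \{X(\mathfrak{x}_1),\ldots,X(\mathfrak{x}_{|\CA|})\} \in \obj(V)$.

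Because the ordering of $\Sigma$ is chosen without referring to the labeling of $V$, the construction $\CA \mapsto \CX$ commutes with arbitrary $\phi \in \iso(V;V')$: the object encoding of the permuted code $\CA^\phi$ is precisely $\CX^\phi$. Consequently $\aut(\CA) = \aut(\CX)$ and, more generally, any canonical labeling of $\CX$ in the object framework is also a canonical labeling of $\CA$ with respect to code equivalence. So computing $\canGraph(\CA)$ reduces to computing $\canObj(\CX,\lab(V))$.

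I then call $\canObj(\CX,\lab(V))$ using \cref{theo:obj}. With $\Delta\rho = \lab(V)$, the group $\Delta = \sym(V)$ has a single orbit on $V$ of size $|V|$, so $k = |V|$. The input size of $\CX$ is $n \in \CO(|V|\cdot|\CA|)$, since each code word contributes a tuple of pairwise disjoint subsets of $V$ of total cardinality $|V|$, plus the $\CO(|\CA|)$ overhead for the outer set. \Cref{theo:obj} thus yields running time $2^{\CO(|V|)}(|V|\cdot|\CA|)^{\CO(1)} = 2^{\CO(|V|)}|\CA|^{\CO(1)}$, as claimed. The only point requiring attention is the isomorphism-invariant ordering of the alphabet in the tuple encoding; once the symbols are fixed independently of the labeling of $V$ (which is immediate since the group action ignores $\Sigma$), no further obstacle remains.
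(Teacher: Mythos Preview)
Your approach is correct and follows essentially the same strategy as the paper: encode the code as a combinatorial object over $V$ and invoke \cref{theo:obj} with $\Delta\rho=\lab(V)$. The paper chooses a slightly different encoding---each symbol of $\Sigma$ is represented by a fixed hereditarily finite set such as $\emptyset,\{\emptyset\},\{\{\emptyset\}\},\ldots$ (these are automorphism-fixed because they contain no elements of $V$), and each code word $\mathfrak{x}$ becomes the set of pairs $\{(v,\mathfrak{x}(v))\mid v\in V\}$---but your tuple-of-preimages encoding works just as well, for the same reason: the alphabet is fixed pointwise by every $\phi\in\iso(V;V')$, so any encoding that hard-wires a $V$-independent ordering of $\Sigma$ commutes with $\phi$.

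One small caution: your first suggested ordering, ``order of first appearance in a fixed scan of the input'', is \emph{not} isomorphism-invariant. Applying $\phi$ to $V$ changes the input encoding of $\CA$ and hence the scan order, so the tuple positions assigned to the symbols could differ between $\CA$ and $\CA^\phi$, breaking the commutation you need. Only your second option---an intrinsic ordering of $\Sigma$ chosen independently of $V$---actually satisfies the requirement you correctly identify; with that choice the rest of your argument goes through as written.
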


\begin{proof}
We need to explain how to encode a string with our formalism.
Observe that $\Sigma$ is an alphabet which has to be fixed
pointwise by
automorphisms.
For this reason, we can encode the elements in $\Sigma$
as $\emptyset,\{\emptyset\},\{\{\emptyset\}\},\ldots$ and so on.
A string $\mathfrak{x}$ can be encoded as
a function and thus as a set of pairs $\{(v,\mathfrak{x}(v))\mid v\in V\}$.
\end{proof}

We can also draw conclusions for canonization of permutation groups up to
permutational isomorphism.

\begin{cor}
Canonical labelings for permutation groups (up to permutational isomorphism)
can be computed in time $2^{\CO(|V|)}|\Delta|^{\CO(1)}$
where $V$ is the permutation domain and
$|\Delta|$ is the order of the group.
\end{cor}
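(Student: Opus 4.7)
The plan is to reduce canonization of permutation groups up to permutational isomorphism to canonization of combinatorial objects via Theorem~\ref{theo:obj}. First I would encode a permutation~$\delta \in \sym(V)$ as the set of pairs $\enc(\delta) := \{(v,\delta(v)) \mid v \in V\}$, exactly as a function is encoded in the object formalism described in \cref{sec:comb:objs:and:lab:cos}. A permutation group $\Delta \leq \sym(V)$ is then encoded by listing all of its elements, i.e.\ as the object $\CX_\Delta := \{\enc(\delta) \mid \delta \in \Delta\} \in \obj(V)$.

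The key verification step is that object isomorphism of $\CX_\Delta$ coincides with permutational isomorphism of $\Delta$. For $\phi \in \sym(V)$, applying $\phi$ to the atoms of $\enc(\delta)$ gives $\enc(\delta)^\phi = \{(\phi(v),\phi(\delta(v))) \mid v \in V\}$, which, read as a function in the left-to-right composition convention, is precisely $\phi^{-1}\delta\phi$. Hence $\CX_\Delta^\phi = \CX_{\phi^{-1}\Delta\phi}$, so $\CX_{\Delta_1} \cong \CX_{\Delta_2}$ holds if and only if $\Delta_1$ and $\Delta_2$ are permutationally isomorphic. Any canonical labeling of $\CX_\Delta$ (together with (CL2)) therefore yields a canonical form for $\Delta$ up to permutational isomorphism.

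With this encoding in hand I would invoke $\canObj(\CX_\Delta,\lab(V))$ from Theorem~\ref{theo:obj}. Taking the acting coset to be the full $\lab(V)$ means the group whose orbits govern the exponential factor is $\sym(V)$, so its largest orbit is $V$ itself and $k = |V|$. The transitive closure of $\CX_\Delta$ consists of $\CX_\Delta$ together with the $|\Delta|$ sets $\enc(\delta)$, their $|V|\cdot|\Delta|$ pair-tuples, and the underlying vertices, so the encoding length $n$ is polynomial in $|V|\cdot|\Delta|$; since $t_{\max}=2$ no hidden blow-up appears. The running time guaranteed by Theorem~\ref{theo:obj} is therefore $2^{\CO(|V|)}(|V|\cdot|\Delta|)^{\CO(1)} = 2^{\CO(|V|)}|\Delta|^{\CO(1)}$, as claimed.

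The only mildly delicate point is the verification that the action on the encoding really matches conjugation in $\sym(V)$ under the paper's left-to-right composition convention; once this is checked, everything else is bookkeeping about input size and orbit sizes. In particular, we do \emph{not} need any new group-theoretic machinery: the listing of $|\Delta|$ generators (in fact, all elements) is affordable since the theorem allows a polynomial dependence on $|\Delta|$, and the simply-exponential dependence on $|V|$ is exactly the bound available from the general object-canonization algorithm.
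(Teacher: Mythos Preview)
Your proposal is correct and follows essentially the same approach as the paper: encode each permutation~$\delta$ as the set of pairs~$\{(v,\delta(v))\mid v\in V\}$, encode the group as the set of these, and apply the general object-canonization theorem. You have supplied considerably more detail than the paper's two-line proof (in particular the explicit verification that the induced action is conjugation and the input-size bookkeeping), but the underlying reduction is identical.
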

\begin{proof}
We encode an element~$\delta\in \Delta\leq \sym(V)$ explicitly as permutation
over~$V$ by using the set~$M(\delta) := \{(v,v^\delta)\mid v\in
V\}$. Canonizing~$\{M(\delta)\mid \delta\in \Delta\}$ is the same as
canonizing the group $\Delta$ up to permutational isomorphism.
\end{proof}

\section{Canonical Generating Sets}\label{sec:Group}

The algorithms we have described throughout the paper canonize combinatorial
objects.
When given an unordered object as an input, they produce a canonical ordered
object as an output. However, this does not immediately give us a canonical
encoding of the output (i.e., is does not provide a canonical output string) as there may be multiple ways to represent the same ordered object.
Indeed, in \cref{sec:comb:objs:and:lab:cos},
we described how we represent objects
using generating sets and colored directed graphs.
Usually it is not very crucial which encoding we
use to translate the generating sets
and colored directed graphs into a binary string
which can be processed by an algorithm or a Turing machine.
However, especially in the context of canonization, it would be desirable to ensure that
if two implicitly given ordered objects $\CX,\CY\in\obj(\{1,\ldots,|V|\})$
are equal, then they also have the same string encoding~$\enc(\CX)=\enc(\CY)$.
For explicitly given objects $\CX$ over $\{1,\ldots,|V|\}$,
this can be achieved since
all elements in $\tranCl(\CX)$
are linearly ordered by ``$\prec$''.
Therefore, the crucial question is how to uniquely encode the
implicitly given labeling cosets $\Delta\rho\leq\lab(\{1,\ldots,|V|\})$.
To answer this question, we make use of 
canonical generating sets.

\begin{lem}[\cite{allender2018minimum}, Lemma 6.2, \cite{tree-width}, Lemma
21]\label{lem:canGen} There is a polynomial-time algorithm that, given a labeling coset $\Delta\rho\leq\lab(\{1,..,|V|\})$
via a generating set,
computes a generating set for $\Delta\rho$.
The output only depends
on $\Delta\rho$ (and not on the given generating set).
\end{lem}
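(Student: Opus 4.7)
The plan is to exploit the natural ordering of $\{1,\ldots,|V|\}$ by building the stabilizer chain with respect to the identity base $(1,2,\ldots,|V|)$ and then picking lexicographically minimal coset representatives at every level. Since the base, the stabilizer chain, and the ordering $\prec$ are all determined by $\Delta\rho$ alone, every choice we make along the way will be canonical.

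First, I would compute the pointwise stabilizer chain $\Delta=\Delta^{(0)}\geq\Delta^{(1)}\geq\ldots\geq\Delta^{(|V|)}=\{\mathrm{id}\}$ where $\Delta^{(i)}=\stab_\Delta(1,\ldots,i)$, using the Schreier--Sims algorithm in polynomial time (available via the preliminaries). For each level $i$, compute the orbit $O_i:=i^{\Delta^{(i-1)}}$. For each $j\in O_i\setminus\{i\}$, define $\sigma_{i,j}$ to be the $\prec$-minimal element of the left coset $\{\delta\in\Delta^{(i-1)}\mid \delta(i)=j\}$. Set $S_\Delta:=\{\sigma_{i,j}\mid i\in[|V|],\ j\in O_i\setminus\{i\}\}$; this is a canonical (strong) generating set for $\Delta$ because the collection of $\sigma_{i,j}$ forms a transversal for every step of the chain, and transversals for a stabilizer chain always generate the group. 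To handle the entire labeling coset, compute the $\prec$-minimal element $\pi\in\Delta\rho$ by the same technique (see below) and return $S_\Delta\cup\{\pi\}$ as the canonical generating set for $\Delta\rho$.

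Canonicity follows because the objects we use are all intrinsic to $\Delta\rho$: the stabilizer chain $\Delta^{(i)}$, the orbits $O_i$, the cosets $\{\delta\in\Delta^{(i-1)}\mid \delta(i)=j\}$, and the coset $\Delta\rho$ itself are all determined by $\Delta\rho$ and not by any particular generating set of it. The selection of the $\prec$-minimal element in each of these sets is deterministic since $\prec$ is a linear order. Hence the output depends only on $\Delta\rho$.

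The main obstacle is computing the $\prec$-minimal element of a coset of a permutation group in polynomial time. This can be done by a standard greedy procedure using the stabilizer chain: to find the $\prec$-minimal $\sigma$ in a coset $\Psi\tau$, iterate through positions $k=1,2,\ldots,|V|$, at each step determining the smallest value $v$ attained by $\sigma(k)$ for some $\sigma$ in the current (updated) coset (an orbit-minimum computation in $\Psi$), then replacing $\Psi\tau$ by its subcoset fixing $k\mapsto v$ (a pointwise stabilizer computation) and repeating. Both orbit and pointwise stabilizer computations are polynomial-time via Schreier--Sims, so the whole procedure runs in polynomial time, and all $\sigma_{i,j}$ and $\pi$ can be extracted within the claimed bound.
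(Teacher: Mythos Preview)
Your proof is correct. The paper does not prove this lemma itself but merely cites it from \cite{allender2018minimum} and \cite{tree-width}; your argument---building the stabilizer chain along the fixed base $(1,\ldots,|V|)$, selecting the $\prec$-minimal coset representative at each level via the greedy sifting procedure, and adjoining the $\prec$-minimal element of $\Delta\rho$---is precisely the standard construction those references contain.

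Two minor remarks that do not affect correctness: with the paper's left-to-right composition convention the set $\{\delta\in\Delta^{(i-1)}\mid\delta(i)=j\}$ is $\Delta^{(i)}\delta_0$ rather than $\delta_0\Delta^{(i)}$, but this is immaterial since you only use that it is \emph{a} coset; and the phrase ``an orbit-minimum computation in $\Psi$'' is slightly imprecise, since what you actually compute is the minimum of $\tau$ applied to the $\Psi$-orbit of $k$, but your description of the procedure makes the intended meaning clear and the computation is still polynomial time.
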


With this lemma,
we can find a canonical generating set
for labeling cosets over natural numbers.
Using the order ``$\prec$'', we can thus compute for every ordered object a
string that uniquely encodes the object.
\begin{lem}\label{lem:enc}
There is an injective encoding~$\enc : \obj(\{1,\ldots
|V|\}) \to \{0,1\}^*$ that maps ordered objects to (0-1)-strings.
The encoding~$\enc$ and its inverse are
polynomial-time computable (for objects given
via generating sets and colored directed graphs).
\end{lem}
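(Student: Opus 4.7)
The plan is to define $\enc$ by structural induction on the ordered object, ensuring that at every stage the output depends only on the object and not on the way it is presented. I reserve four type tags (one each for integer atom, coset atom, tuple, and set) plus delimiters, and encode $|V|$ at the head of the string so that permutation lengths can be parsed unambiguously.

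First I would handle atoms. An integer $v\in V\subseteq\NN$ is encoded by its binary representation prefixed by the integer tag. For a labeling coset $\Delta\rho\leq\lab(\{1,\ldots,|V|\})$, I would invoke \cref{lem:canGen} to obtain a generating set $S$ that depends only on $\Delta\rho$, write each permutation $\sigma\in S$ explicitly as the sequence $(\sigma(1),\ldots,\sigma(|V|))$, sort these sequences according to the order ``$\prec$'' on permutations from \cref{sec:comb:objs:and:lab:cos}, and concatenate under the coset tag. For compound objects I recurse: a tuple $(X_1,\ldots,X_t)$ becomes the tuple tag followed by $\enc(X_1),\ldots,\enc(X_t)$ with delimiters, while a set $\{X_1,\ldots,X_t\}$ is first sorted by $\prec$ on ordered objects (which is a strict total order on distinct ordered objects, by inspection of the inductive definition of $\prec$) and then encoded under the set tag exactly as a tuple would be. To keep the length polynomial when subobjects are shared in the DAG representation of $\tranCl(\CX)$, I would actually process the elements of $\tranCl(\CX)$ bottom-up in $\prec$-order, assign each one a canonical index, and let compound objects refer to their children by index rather than inlining.

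The inverse $\enc^{-1}$ parses the type tags, lengths, and indices to reconstruct each element of the transitive closure in the recorded order: integers are immediate, tuples and sets are reassembled from their children, and a coset is rebuilt as $\langle S\rangle$ from the parsed generating set $S$ in polynomial time by standard permutation-group algorithms. Injectivity of $\enc$ is then immediate by induction: the canonical generating set from \cref{lem:canGen}, the canonical sort of its elements, and the canonical sort of set elements all depend only on the object itself, so equal ordered objects produce identical strings. The polynomial-time bounds for both $\enc$ and $\enc^{-1}$ follow from the polynomial-time guarantees of \cref{lem:canGen} and \cref{lem:prec} together with the fact that the recursion visits at most $|\tranCl(\CX)|$ nodes.

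The hard part is precisely the coset case: without a canonical choice of representation for $\Delta\rho$ there is no way to prevent two different generating sets for the same coset from producing different strings. This is exactly what \cref{lem:canGen} resolves, and once it is in hand the remainder of the construction is essentially bookkeeping over the DAG of $\tranCl(\CX)$.
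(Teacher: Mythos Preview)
Your proposal is correct and follows essentially the same approach the paper sketches in the text surrounding the lemma: use the total order ``$\prec$'' to canonically sort the elements of $\tranCl(\CX)$ and the members of each set, and resolve the only non-trivial case---labeling cosets---via the canonical generating sets of \cref{lem:canGen}. The paper does not spell out the DAG/index bookkeeping you describe to keep the output polynomial in the input size, but that is the natural implementation detail and is consistent with the paper's stated representation of objects as colored DAGs over $\tranCl(\CX)$.
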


There is a second application of canonical generating sets in our context. For
this, let us consider the bottleneck of our canonization algorithms. The recursions
perform efficiently, except for
the transitive cases.
The concrete situation is that
we have some group $\Delta^\ord\leq\sym(\{1,\ldots,|V|\})$
acting transitively on
a set $A^\ord\subseteq\{1,\ldots,|V|\}$.
In this case, the algorithm computes
a subgroup $\Psi^\ord\leq\Delta^\ord$,
decomposes $\Delta^\ord$ into cosets of $\Psi^\ord$
and recurses.
Here, the branching degree of the algorithm
is the index of the subgroup $\Psi^\ord$ in $\Delta^\ord$.
However, progress is made since the
$\Psi^\ord$-orbits on $A^\ord$
are smaller than the $\Delta^\ord$-orbits
on $A^\ord$.
Luks's group theoretic framework~\cite{DBLP:journals/jcss/Luks82}, which solves isomorphism for bounded degree graphs in polynomial time,
attacks this bottleneck.
He used group theoretic insights
to argue that if $\Delta^\ord$ is of a group
of a certain type (a so called $\Gamma_d$-group), then there is a subgroup
of relatively small index and small orbit size.
Babai's algorithm in turn, attacks the
bottleneck of Luks's approach which is
characterized by \emph{giant} homomorphisms
from $\Delta^\ord$ to some symmetric group.
In the rest of this section, we discuss how these
approaches can be employed within our canonization framework
by the use of canonical generating sets.

A block system $\CB=\{B_1,\ldots,B_t\}$ for a
group $\Delta\leq\sym(V)$
is a $\Delta$-invariant partition $\CB$ of $V=B_1\cupdot\ldots\cupdot B_t$.
For a group $\Delta\leq\sym(V)$, the composition width
of $\Delta$, denoted as $\cw(\Delta)$, is
the smallest integer $k$ such that
all composition
factors of $\Delta$ are isomorphic to a subgroup of $\sym(k)$.
Luks's approach exploits that
one can efficiently compute a minimal block system for a given group.
One then computes the subgroup $\Psi$ that stabilizes all the blocks setwise.
Babai, Cameron and P\'alfy~\cite{MR679977} showed
that
for groups of bounded composition width
such a $\Psi$
is of relatively small index.
Therefore, going down to the subgroup $\Psi$ is not too costly,
but still reduces the size of the orbits in the recursive calls significantly.
Together these results prove the following.

\begin{theo}[\cite{DBLP:journals/jcss/Luks82} and
\cite{MR679977},\cite{DBLP:conf/focs/BabaiKL83}]\label{lem:luks} Let
$\Delta\leq\sym(V)$ be a group
that is transitive on a set $A\subseteq V$.
There is a subgroup $\Psi\leq \Delta$
and $b\in\NN$ such that:
\begin{enumerate}
  \item[(1)] The size of the $\Psi$-orbits on $A$ is bounded by $b$.  
  \item[(2)] The index $[\Delta:\Psi]$ is bounded by
  $(|A|/b)^{\CO(\cw(\Delta))}$.
\end{enumerate}
Furthermore, there is an
algorithm $\LA$ that given $A\subseteq V$ and a generating set for $\Delta$,
computes a generating set for a subgroup $\Psi\leq\Delta$.
The algorithm runs in time $|V|^{\CO(\cw(\Delta))}$.
\end{theo}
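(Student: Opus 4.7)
The plan is to construct $\Psi$ as the setwise stabilizer in $\Delta$ of an appropriately chosen system of blocks on $A$. First I would, using a standard polynomial-time algorithm for computing block systems (e.g.\ Atkinson's algorithm together with Schreier--Sims), compute a system of imprimitivity $\CB=\{B_1,\ldots,B_m\}$ for the transitive action of $\Delta$ on $A$ that is maximal among those with $|B_i|\leq b$. If $b\geq |A|$ the trivial choice $\Psi:=\Delta$ works, so we may assume $m\geq 2$. Set
\[
\Psi:=\stab_\Delta(B_1,\ldots,B_m).
\]
Each $\Psi$-orbit on $A$ is contained in a single block, so has size at most $b$, giving property~(1).

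For property~(2), note $[\Delta:\Psi]=|\Delta^\CB|$, where $\Delta^\CB\leq\sym(\CB)$ is the induced action on the $m\leq |A|/b$ blocks. To bound this order, I would take a maximal refinement chain of $\Delta$-invariant block systems $\{A\}=\CB_0\succ\CB_1\succ\cdots\succ\CB_r=\CB$, yielding a corresponding chain of setwise stabilizers $\Delta=\Delta_0>\Delta_1>\cdots>\Delta_r=\Psi$. By maximality, for each $i$ the quotient $\Delta_i/\Delta_{i+1}$ embeds as a \emph{primitive} permutation group on the set of sub-blocks refining a block of $\CB_i$. The Babai--Cameron--P\'alfy theorem (the key external ingredient) then bounds the order of a primitive permutation group of composition width at most $\cw(\Delta)$ acting on $n$ points by $n^{\CO(\cw(\Delta))}$. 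Multiplying these bounds along the chain and using telescoping of the degrees yields
\[
[\Delta:\Psi]\;=\;\prod_{i=0}^{r-1}|\Delta_i/\Delta_{i+1}|\;\leq\;(|A|/b)^{\CO(\cw(\Delta))},
\]
which is property~(2).

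For the algorithmic claim I would implement the above directly. Computing block systems and their maximal refinements can be done in time polynomial in $|V|$ from a generating set. Once $\CB$ is known, the stabilizer $\Psi$ is computed by descending along the chain: at each step one stabilizes the next refinement, which is a subgroup of polynomial-time-testable index $|\Delta_i/\Delta_{i+1}|$, hence computable in time polynomial in that index and $|V|$ by the membership-test--based algorithm cited in the preliminaries. Summing over the chain, the total work is bounded by $[\Delta:\Psi]\cdot |V|^{\CO(1)}\leq |V|^{\CO(\cw(\Delta))}$.

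The main obstacle is the index bound in Step~2 of the analysis, which rests on the Babai--Cameron--P\'alfy bound on orders of primitive groups and ultimately on the classification of finite simple groups; however, since this is a classical result explicitly cited in the statement, we simply invoke it as a black box. Everything else is straightforward block-structure bookkeeping and standard permutation-group algorithmics.
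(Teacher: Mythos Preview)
Your setup treats $b$ as if it were a given parameter, but in the statement both $\Psi$ and $b$ are existentially quantified: the algorithm is free to choose them. The paper's proof sketch simply picks a single block system $\CB=\{B_1,\ldots,B_t\}$ on which the induced action of $\Delta$ is \emph{primitive} (a coarsest nontrivial block system), sets $b:=|B_1|$ and $\Psi:=\stab_\Delta(B_1,\ldots,B_t)$. Then $[\Delta:\Psi]=|\Delta^{\CB}|$ is the order of a primitive group of degree $t=|A|/b$ and composition width at most $\cw(\Delta)$, so one application of Babai--Cameron--P\'alfy yields the bound directly. No chain is needed.

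Your multi-step chain argument, by contrast, has a real gap. The quotient $\Delta_i/\Delta_{i+1}$ does \emph{not} embed as a single primitive permutation group on the sub-blocks inside one fixed block of $\CB_i$; it only embeds into the \emph{direct product}, over all $|\CB_i|$ blocks, of such actions (the projection to a single block need not be injective). Hence the bound on $|\Delta_i/\Delta_{i+1}|$ acquires an extra exponent $|\CB_i|$, and the telescoping you invoke fails. In fact the statement you are implicitly proving --- that the index bound holds for \emph{every} choice of $b$ --- is false: for $\Delta=C_2\wr C_2\wr\cdots\wr C_2$ ($k$-fold) acting on $|A|=2^k$ points one has $\cw(\Delta)=2$, yet taking $b=1$ (singleton blocks) gives $[\Delta:\Psi]=|\Delta|=2^{2^k-1}$, which is not $(2^k)^{\CO(1)}$. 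So $b$ cannot be chosen freely; it must come from a block system on which $\Delta$ already acts primitively, exactly as in the paper's one-step construction.
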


\begin{proof}[Proof sketch]
Let $\CB=\{B_1,\ldots,B_t\}$ be a minimal block system for $A\subseteq V$
and let $b:=|A|/t=|B_1|$.
Define $\Psi:=\stab_\Delta(B_1,\ldots,B_t)$.
The bound on the index follows from a bound of the size of
primitive permutation groups of degree $t$.
While Babai, Cameron, P\'alfy \cite{MR679977}
implicitly proved a bound of $t^{\omega(\cw(\Delta))}$
where $\omega\in\CO(n\log n)$, it
was later observed that $\omega$ can be chosen linear
as stated in
\cite{DBLP:conf/focs/BabaiKL83}, see \cite{liebeck1999simple}.
\end{proof}

For us, a crucial detail here is that the minimal block system and thus $\Psi$ is not unique.
For an isomorphism algorithm,
it is usually sufficient to
compute an arbitrary subgroup $\Psi\leq \Delta$
with Properties (1) and (2).
However, for the purpose of canonization,
we will need
that for each group $\Delta$
one particular subgroup $\Psi\leq \Delta$ will be computed
and the choice of $\Psi$ has to be consistent across different
calls of algorithm $\LA$.
More precisely, we need that for two
inputs $I$ and $I'$
which are both encodings for the pair $(A,\Delta)$,
the algorithm $\LA$ with input $I$
has the
same output as $\LA$ with input $I'$.
For canonization algorithms as in
\cite{babai1983canonical}
this consistency was achieved by computing
one particular, so called \emph{smallest}, minimal block system~$\CB^*$.
However, such an approach needs some insight into the proof
of \cref{lem:luks}
and one has to argue that the proof can be extended
rather than using the theorem as a black box only.

Using canonical generating sets for objects
over natural numbers, we can reformulate the
theorem. The difference is that in the reformulation the subgroup can be chosen
in a unique way.
This method is
quite general as it indeed uses \cref{lem:luks}
as a black box only.

\begin{theo}[\cite{babai1983canonical} and
\cite{MR679977},\cite{DBLP:conf/focs/BabaiKL83}]
\label{lem:luksN}
Let $\Delta^\ord\leq\sym(\{1,\ldots,|V|\})$
be a group
that is transitive on a set $A^\ord\subseteq \{1,\ldots,|V|\}$.
There is a subgroup $\Psi^\ord\leq\Delta^\ord$
and $b\in\NN$ such that:
\begin{enumerate}
  \item[(1)] The size of the $\Psi^\ord$-orbits on $A^\ord$ are bounded by $b$.  
  \item[(2)] The index $[\Delta^\ord:\Psi^\ord]$ is bounded by
  $(|A^\ord|/b)^{\CO(\cw(\Delta^\ord))}$.
\end{enumerate}
Furthermore, there is an
algorithm $\LB$ that given $A^\ord\subseteq\{1,\ldots,|V|\}$ and a generating set
for $\Delta^\ord$, computes a generating set for the subgroup
$\Psi^\ord\leq\Delta^\ord$.
The group $\Psi^\ord$ only depends on $A^\ord$ and $\Delta^\ord$
(and not  on the given generating set for~$\Delta^\ord$).
The algorithm runs in time $|V|^{\CO(\cw(\Delta^\ord))}$.
\end{theo}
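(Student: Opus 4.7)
The plan is to reduce \cref{lem:luksN} to the non-canonical \cref{lem:luks} using \cref{lem:canGen}, so that algorithm~$\LA$ is used as a black box only. The key idea is that although $\LA$ may produce different subgroups~$\Psi$ for different generating sets of the same group~$\Delta^\ord$, any deterministic algorithm becomes canonical once its input is canonicalized. We therefore insert a preprocessing step that replaces the given generating set of~$\Delta^\ord$ by a canonical one before invoking~$\LA$.

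Concretely, the algorithm $\LB$ would proceed as follows. Given $A^\ord\subseteq\{1,\ldots,|V|\}$ and an arbitrary generating set $S$ for $\Delta^\ord$, first apply \cref{lem:canGen} to $S$ (viewing $\Delta^\ord$ as the labeling coset $\Delta^\ord\cdot\mathrm{id}\leq\lab(\{1,\ldots,|V|\})$) to obtain a canonical generating set $S^\ord$ whose output depends only on~$\Delta^\ord$ as a group, not on the particular~$S$. Then run the algorithm~$\LA$ from \cref{lem:luks} on the input $(A^\ord, S^\ord)$ and return the generating set it produces for some subgroup $\Psi^\ord\leq\Delta^\ord$.

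For correctness of the canonicity property, since $\LA$ is a deterministic algorithm, its output on input $(A^\ord, S^\ord)$ is a function of that input alone. Because $A^\ord$ is already canonically described (as a subset of $\NN$) and $S^\ord$ depends only on $\Delta^\ord$ by \cref{lem:canGen}, the resulting subgroup $\Psi^\ord$ depends only on $A^\ord$ and $\Delta^\ord$, as required. Properties (1) and (2) are inherited directly from \cref{lem:luks}, which guarantees them for arbitrary input generating sets. The running time bound is $|V|^{\CO(\cw(\Delta^\ord))}$ since \cref{lem:canGen} runs in polynomial time and $\LA$ runs in time $|V|^{\CO(\cw(\Delta^\ord))}$.

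There is really no hard step here; the only thing to verify is that the canonicity requirement is entirely orthogonal to the quantitative content of \cref{lem:luks}, so it can be enforced by preprocessing rather than by opening up the Babai--Cameron--P\'alfy bound or the minimal block system construction (as was done in~\cite{babai1983canonical}). The main obstacle, if any, is conceptual rather than technical: one must be comfortable treating the deterministic algorithm~$\LA$ strictly as a function on encodings, so that canonicity of the input transports to canonicity of the output subgroup even though the generating set emitted by~$\LA$ itself need not be canonical (only the group it generates is).
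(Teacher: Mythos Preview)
Your proposal is correct and follows exactly the paper's own proof: compute a canonical generating set for $\Delta^\ord$ via \cref{lem:canGen}, then feed it (together with $A^\ord$) to the black-box algorithm~$\LA$ from \cref{lem:luks}. The paper additionally orders the canonical generators according to ``$\prec$'' before calling~$\LA$, but this is only to pin down the encoding passed to the deterministic~$\LA$ and is implicit in your treatment of $S^\ord$ as a fixed input.
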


\begin{proof}
Let~$(\delta^\ord_1,\ldots,\delta^\ord_t)$ be a canonical generating set
of~$\Delta^\ord$ ordered according to~``$\prec$''.
We execute the algorithm $\LA$ from
\cref{lem:luks} with input $(A^\ord,(\delta^\ord_1,\ldots,\delta^\ord_t))$
and return its output.
\end{proof}

Using the result of the previous theorem, we can
achieve a running time for canonization expressible in terms of
the composition width of $\Delta$.

\begin{cor}\label{cor:gammaD}
A function $\canObj$
solving \cref{prob:CL:Obj} can be computed in time
$k^{\CO(\cw(\Delta))}n^{\CO(1)}$
where $n$ is the input size and
$k$ is the size of the largest $\Delta$-orbit of $V$.
\end{cor}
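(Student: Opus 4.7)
The plan is to modify the algorithm of \cref{theo:obj} by replacing the three transitive cases that appear in \cref{lem:canMatch}, \cref{lem:canSetHyper}, and \cref{lem:canSet} (through the \texttt{transitive} macro). In the original version, each transitive case splits the focus set $A^\ord$ into two equal halves and passes to $\stab_{\Delta^\ord}(A^\ord_1,A^\ord_2)$, which can have index as large as $\binom{|A^\ord|}{\lfloor|A^\ord|/2\rfloor}$. This branching factor of $2^{\Theta(|A^\ord|)}$ is the sole source of the $2^{\CO(k)}$ factor in the running time of \cref{theo:obj}, so replacing it with the canonical subgroup of \cref{lem:luksN} directly sharpens the bound. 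Concretely, in each transitive case we invoke algorithm $\LB$ on $(A^\ord,\Delta^\ord)$ to obtain a canonically determined subgroup $\Psi^\ord\leq\Delta^\ord$ with $\Psi^\ord$-orbits of size at most some $b\leq |A^\ord|/2$ on $A^\ord$ and index $[\Delta^\ord:\Psi^\ord]\leq(|A^\ord|/b)^{\CO(\cw(\Delta^\ord))}$. We then decompose $\Delta^\ord=\bigcup_{i\in[s]}\delta^\ord_i\Psi^\ord$ and recurse on each coset exactly as in the original algorithm.

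Correctness is largely inherited from the existing proofs. Isomorphism invariance (CL1) is preserved because $\Psi^\ord$ depends only on the pair $(A^\ord,\Delta^\ord)$ and not on the generating set used to represent $\Delta^\ord$, so the coset decomposition and all ensuing recursive computations remain isomorphism-invariant; property (CL2) is unaffected, since the coset decomposition is still a genuine partition of $\Delta^\ord$ into left cosets of a subgroup and we still keep only the minimal-canonical-form children via the $\langle\cdot\rangle$ operation. For the running time analysis, let $R(k)$ denote the number of recursive calls in an instance whose largest relevant $\Delta^\ord$-orbit has size at most $k$. The intransitive cases reduce $k$ (or split the instance) exactly as before, while the modified transitive case now has branching $s\leq(k/b)^{\CO(\cw(\Delta))}$ with each child having maximum orbit size at most $b\leq k/2$. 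A straightforward induction of the shape $R(k)\leq(k/b)^{C\cw(\Delta)}\,R(b)$ with $R(b)\leq b^{C\cw(\Delta)}$ solves to $R(k)\leq k^{\CO(\cw(\Delta))}$; multiplying by the polynomial-plus-$|V|^{\CO(\cw(\Delta))}$ work per call gives the claimed overall bound of $k^{\CO(\cw(\Delta))}n^{\CO(1)}$.

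The main subtlety is ensuring that $\cw(\Delta^\ord)\leq\cw(\Delta)$ in every recursive subinstance, so that the exponent remains controlled by the composition width of the original input throughout the recursion. This rests on the standard fact that the class of groups whose composition factors all embed in $\sym(\cw(\Delta))$ is closed under taking subgroups. In $\canMatch$ and $\canSetHyper$ the active group is always a subcoset of the original input $\Delta\rho$ by construction, so this holds trivially. In $\canSet$ one first applies the preprocessing step already present in the proof of \cref{lem:canSet}, which replaces each input $\Delta_i\rho_i$ by $\canInt(\Delta_i\rho_i,\Delta\rho)$ and thus forces $\Delta_i\leq\Delta$; every group $\Theta^\ord$ that subsequently appears in a transitive case is obtained from some $\Delta_i$ by further restriction and is therefore also a subgroup of $\Delta$, as required.
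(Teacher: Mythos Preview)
Your proposal is correct and follows essentially the same approach as the paper: replace the brute-force splitting subgroup $\stab_{\Delta^\ord}(A_1^\ord,A_2^\ord)$ in the transitive cases of $\canMatch$, $\canSetHyper$, and $\canSet$ by the canonical subgroup produced by algorithm~$\LB$ from \cref{lem:luksN}, and then observe that the recursion $R(|A^*|)\leq(|A^*|/b)^{\CO(\cw(\Delta))}R(b)$ solves to $|A^*|^{\CO(\cw(\Delta))}$. Your write-up is in fact more thorough than the paper's, which does not spell out why (CL1) and (CL2) survive the modification and does not discuss the composition-width subtlety at all; your observation that the preprocessing step in \cref{lem:canSet} forces every $\Theta_i$ encountered later to be a subgroup of $\Delta$, so that $\cw(\Theta^\ord)\leq\cw(\Delta)$ throughout, is a genuine point the paper leaves implicit.
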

\begin{proof}
In the transitive case of the algorithms for $\canMatch$, $\canSetHyper$ and
$\canSet$, we are in the situation of a group $\Delta^\ord$
(or $\Theta^\ord$ respectively)
that is transitive on a set $A^\ord$.
To achieve the running time, we replace the
line
``Define $\Psi^\ord:=\stab_{\Delta^\ord}(A^\ord_1,A^\ord_2)$''
with
``Define $\Psi^\ord$ as the output of algorithm $\LB$
from \cref{lem:luksN}
with input $A^\ord$ and $\Delta^\ord$''.

\begin{runtime}
The critical point for the running time analysis
is that a recursion of the type
$R(|A^*|)\leq (|A^*|/b)^{\CO(\cw(\Delta))} R(b)$
leads to some function $R(|A^*|)\in|A^*|^{\CO(\cw(\Delta))}$.
\end{runtime}
\end{proof}

We remark that there is a particular technique~\cite{DBLP:conf/focs/BabaiKL83}
to improve the running time for isomorphism algorithms from
$k^{\CO(\cw(\Delta))}n^{\CO(1)}$ to $k^{\CO(\cw(\Delta)/\log
\cw(\Delta))}n^{\CO(1)}$.
As this technique exploits the precise structure of large primitive groups and
needs some detailed background, we  will not describe it here.

\section{Outlook and Open Questions}

We presented a general framework to devise canonization algorithms for
combinatorial objects. It allows for the use of various other algorithms developed in the context of isomorphism problems, without having to worry about isomorphism invariance.
We believe that it not only should be possible to use the framework to design
further canonization algorithms,  but that the framework will
simplify the task. 
Especially the use of canonical generating sets
simplifies the task to adapt theorems used for isomorphism to
adequate canonization variants.
Naturally, the bounded degree isomorphism algorithm of \cite{bounded-degree}
should then be amenable to canonization. However, this remains as
future work.

We ask whether canonization of combinatorial objects can be performed in
time $n^{\polylog(|V|)}$ where~$n$ is the size of the object.
Such a running time is of deep interest
for a canonization algorithm for graphs
of tree width at most $k$
running in time $|V|^{\polylog(k)}$ which
in turn would generalize Babai's
quasipolynomial time bound.
However, even for the isomorphism-version of this problem we have no such algorithm yet.

A different question to this end
is whether the
time bound that we presented in
this paper (of $2^{\CO(|V|)}n^{\CO(1)}$ for objects of size $n$)
can be improved to
moderately exponential
$2^{\CO(|V|^{1-\epsilon})}n^{\CO(1)}$ for some $\epsilon>0$.

It remains an open problem whether isomorphism of permutation groups over a
permutation domain $V$ (that however are implicitly given via a generating set)
can be decided in time $2^{\CO(|V|)}$ regardless of the order of the groups
\cite{DBLP:conf/soda/BabaiCGQ11,codenotti2011testing}. Our framework might help
to design such an isomorphism algorithm.
Indeed, in
Section~\ref{sec:obj}, we showed how codes and permutation groups can be encoded
as combinatorial objects. However, we did so by explicitly listing all
elements.
To encode implicitly given permutation groups, our framework can readily be
extended to allowing implicitly represented
cosets of $V$ to the set $V$ itself
as a new third form of atom,
but at this point we do not see how to handle the arising objects efficiently.

\bibliographystyle{alpha}
\bibliography{references}

\addcontentsline{toc}{section}{Bibliography}

\end{document}